%% file: main.tex
\documentclass[journal,twoside,web]{ieeecolor}

\input{preamble/additional-packages}

\input{preamble/custom-definitions}
\input{preamble/acronyms}

\def\BibTeX{{\rm B\kern-.05em{\sc i\kern-.025em b}\kern-.08em
    T\kern-.1667em\lower.7ex\hbox{E}\kern-.125emX}}
\definecolor{abstractbg}{rgb}{0.89804,0.94510,0.83137}
\begin{document}
	
\clearpage
\onecolumn
\thispagestyle{empty}

\vspace*{\fill}

\begin{center}
	\large \textcopyright~2026 IEEE. Personal use of this material is permitted.
	Permission from IEEE must be obtained for all other uses, in any current or
	future media, including reprinting/republishing this material for advertising
	or promotional purposes, creating new collective works, for resale or
	redistribution to servers or lists, or reuse of any copyrighted component of
	this work in other works.
\end{center}

\vspace*{\fill}

\clearpage
\twocolumn

\title{Scheduling Mechanisms in Wireless Sensor-Actuator Networks for Multi-rate Periodic Control in Industry 4.0}
\author{Dingwen Yuan, Luis F. Abanto-Leon, and Matthias Hollick
\thanks{This work has been supported in part by the German Science Foundation (DFG) within the project CRUST (Grant 503199853) and by the LOEWE initiative (Hesse, Germany) within the emergenCITY center [LOEWE/1/12/519/03/05.001(0016)/72].}
}

\IEEEtitleabstractindextext{%
\fcolorbox{abstractbg}{abstractbg}{%
\begin{minipage}{\textwidth}%
\begin{wrapfigure}[13]{r}{3in}%
	\includegraphics[width=2.85in]{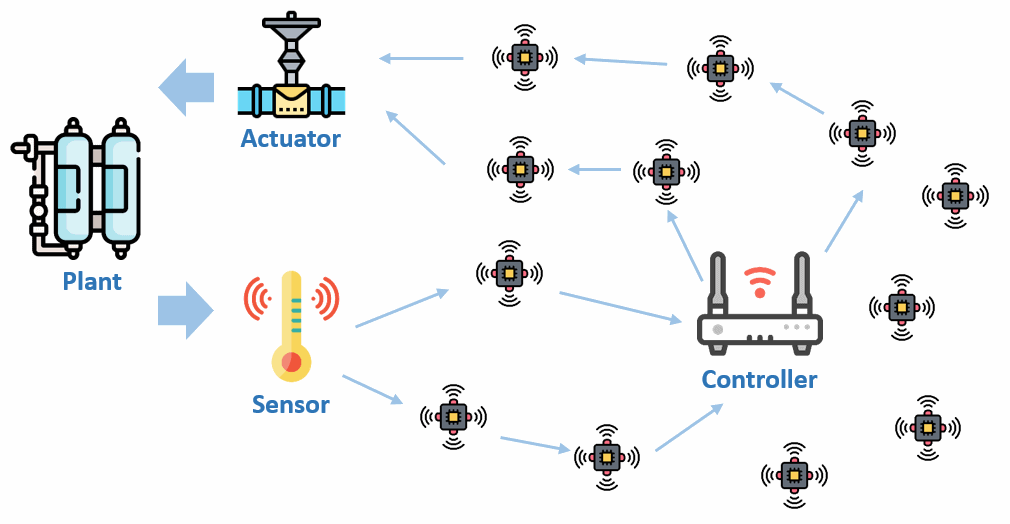}%
\end{wrapfigure}%

\input{preamble/abstract}
\input{preamble/keywords}
\end{minipage}}}

\maketitle

\input{sections/introduction}
\input{sections/related-work}

\input{sections/system-model}

\input{sections/proposed-framework}

\input{sections/proposed-method}

\input{sections/simulation-results}
\input{sections/extension-large-network}
\input{sections/conclusions}

\input{sections/appendices}

\bibliographystyle{IEEEtran}
\bibliography{ref}

\begin{IEEEbiography}[{\includegraphics[width=1in,height=1.25in,clip,keepaspectratio]{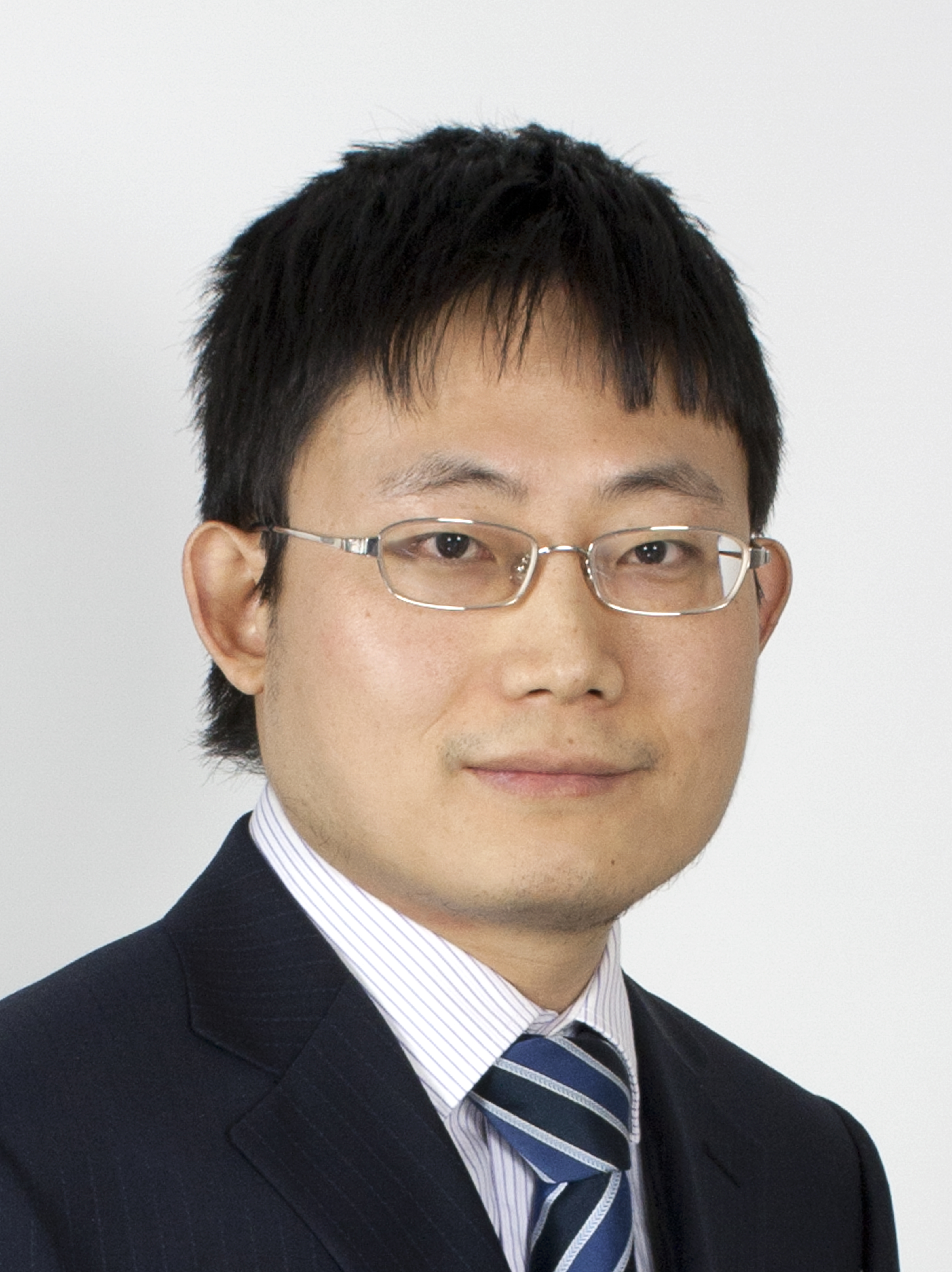}}]{Dingwen Yuan} He received his Bachelor of Engineering degree from the Department of Automation, Shanghai Jiao Tong University in 2002. In 2010 and 2016, he got his Master of Science and Ph.D. degrees from Technische Universität Darmstadt. Currently he works as an engineer at Nokia. His research interests include the QoS guarantee in wireless sensor networks (WSN), and the scheduling and optimization of millimeter-wave networks and cellular networks in general.
\end{IEEEbiography}

\begin{IEEEbiography}[{\includegraphics[width=1in,height=1.25in,clip,keepaspectratio]{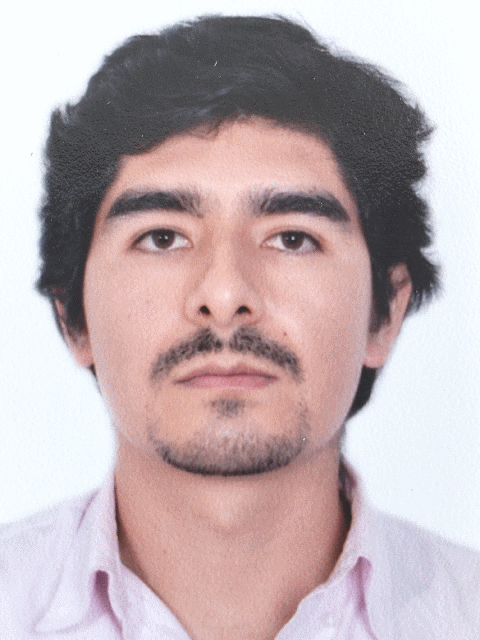}}]{Luis F. Abanto-Leon} earned his master's degree in Communications Engineering from Tohoku University, Japan, in 2015 and completed his Ph.D. in Computer Science at Technische Universität Darmstadt, Germany, in 2023. He is currently a postdoctoral researcher at Ruhr-Universität Bochum, Germany. His research interests include optimization theory, signal processing, and algorithm design for radio resource management in wireless networks.
\end{IEEEbiography}

\begin{IEEEbiography}[{\includegraphics[width=1in,height=1.25in,clip,keepaspectratio]{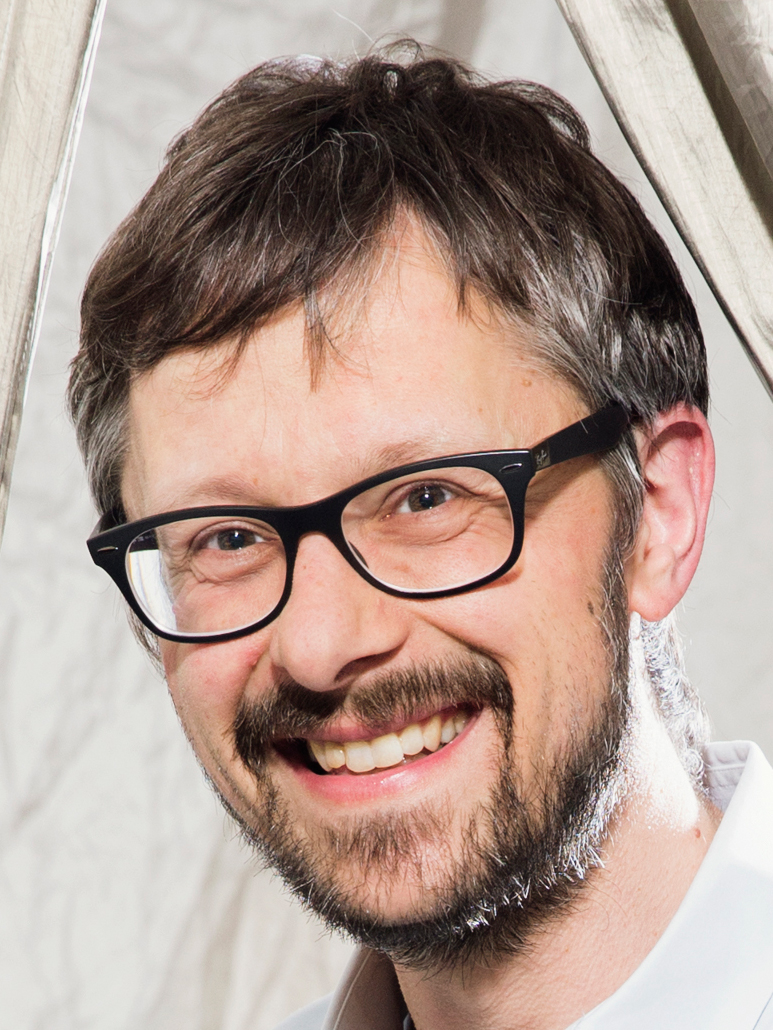}}]{Matthias Hollick} received the Ph.D. degree from Technische Universität (TU) Darmstadt in 2004. He is currently the Head of the Secure Mobile Networking Lab, Department of Computer Science, TU Darmstadt, Germany. He has been researching and teaching at TU Darmstadt, Universidad Carlos III de Madrid, and the University of Illinois at Urbana–Champaign. His research focus is on resilient, secure, privacy-preserving, and quality-of-service-aware communication for mobile and wireless systems and networks.
\end{IEEEbiography}

\end{document}

%% file: preamble/additional-packages.tex
\usepackage{jsen}
\usepackage{amsmath,amssymb,amsfonts}
\usepackage{cite}
\usepackage[linesnumbered,ruled,vlined]{algorithm2e} 
\usepackage{rtsched}
\usepackage{tikz}
\usetikzlibrary{arrows.meta,calc,decorations.markings,math,arrows.meta}
\usepackage{pifont}
\let\labelindent\relax
\usepackage{enumitem}
\usepackage{booktabs}
\usepackage[outdir=./]{epstopdf}
\usepackage[export]{adjustbox}
\usepackage{multirow}
\usepackage{graphicx}
\usepackage{pgfplots}
\usepackage{subcaption}
\usepackage{glossaries}
\usepackage{array}
\usepackage{comment}
\usepackage{multirow}
\usepackage{makecell}
\usepackage{graphicx}
\usepackage{tcolorbox}
\usepackage{textcomp}
\usetikzlibrary{shapes}
\usepackage{float}
\usepackage{dblfloatfix }
\usepackage[capitalise]{cleveref}
\usepackage{textcomp}
\usepackage{wrapfig}

\usepackage{microtype}

%% file: preamble/custom-definitions.tex
\newtheorem{thm}{Theorem}

\makeatletter
\newcommand{\algorithmfootnote}[2][\footnotesize]{%
  \let\old@algocf@finish\@algocf@finish
  \def\@algocf@finish{\old@algocf@finish
    \leavevmode\rlap{\begin{minipage}{\linewidth}
    #1#2
    \end{minipage}}%
  }%
}
\makeatother

\DeclareMathOperator{\lcm}{lcm}

\tikzstyle{sensor}=[circle, draw, thin,fill=cyan!20, scale=0.8]
\tikzstyle{actuator}=[rectangle, draw, thin,fill=green!20, scale=0.8]
\tikzstyle{controller}=[regular polygon,regular polygon sides=6, draw, thin,fill=orange!20, scale=0.8]
\tikzstyle{relay}=[regular polygon,regular polygon sides=3, draw, thin,fill=yellow!20, scale=0.4]

\SetKwInput{KwInit}{Initialize}
\SetAlFnt{\small}
\SetAlCapFnt{\small}
\SetAlCapNameFnt{\small}
\SetAlCapHSkip{1pt}
\IncMargin{\parindent}

%% file: preamble/acronyms.tex
\makeglossaries
\newacronym{TDMA}{TDMA}{time-division multiple access}
\newacronym{WSAN}{WSAN}{wireless sensor-actuator network}
\newacronym{LLF-RC}{LLF-RC}{least-laxity-first with remaining constraints}
\newacronym{OA}{OA}{opportunistic aggregation}
\newacronym{RS}{RS}{repetitive scheduling}
\newacronym{HS}{HS}{hyperperiod scheduling}
\newacronym{TSCH}{TSCH}{time-slotted channel hopping}

\newacronym{EDF}{EDF}{earliest deadline first}
\newacronym{LLF}{LLF}{least-laxity-first}
\newacronym{CLLF}{CLLF}{conflict-aware least-laxity-first}
\newacronym{PC-LLF}{PC-LLF}{path-collision aware least-laxity-first}
\newacronym{RTQS}{RTQS}{real-time query scheduling}
\newacronym{DM}{DM}{deadline monotonic}
\newacronym{PDM}{PDM}{proportional deadline monotonic}

\newacronym{SPRF}{SPRF}{scheduling of periodic real-time flow}
\newacronym{TASA}{TASA}{traffic-aware scheduling algorithm}
\newacronym{RANDOM}{RANDOM}{random links scheduling algorithm}
\newacronym{MIS}{MIS}{maximum independent set}
\newacronym{MDCG}{MDCG}{multi-dimensional conflict graph}
\newacronym{MWIS}{MWIS}{maximum weighted independent set}
\newacronym{HSA}{HSA}{hop-wise scheduling algorithm}

\newacronym{SRCA}{SRCA}{slot reallocation for collision avoidance}
\newacronym{CBPF}{CBPF}{contention-based proportional fairness}
\newacronym{SmarTiSCH}{SmarTiSCH}{SmarTiSCH}

\newacronym{RSP}{RSP}{resource sharing path}
\newacronym{ASAP}{ASAP}{adaptation of slot-size and aggregation of packets}

\newacronym{SNR}{SNR}{signal-to-noise ratio}
\newacronym{SER}{SER}{symbol error rate}
\newacronym{PRR}{PRR}{packet reception rate}

\newacronym{RM}{RM}{rate monotonic}
\newacronym{EDZL}{EDZL}{earliest deadline zero laxity}
\newacronym{EPD}{EPD}{earliest proportional deadline}

\newacronym{LOS}{LOS}{line-of-sight}
\newacronym{NLOS}{NLOS}{non-line-of-sight}

\newacronym{CDF}{CDF}{cumulative distribution function}

%% file: preamble/abstract.tex

\begin{abstract}

This paper investigates scheduling strategies for wireless sensor-actuator networks (WSANs) in Industry 4.0 scenarios. 
In particular, we address the problem of real-time scheduling for multi-rate control systems by proposing a novel framework. 
Our framework features four strategies that improve reliability, schedulability and execution time, and reduce communication and storage costs. 
\emph{Two-phase scheduling} is our first strategy, devised to improve communication reliability. 
Our second strategy is the \emph{least-laxity-first with remaining conflicts} (LLF-RC) scheduling algorithm, which has high schedulability and affordable execution time. LLF-RC also keeps the maximum queue length at a moderate level, making it suitable for storage-constrained devices. 
Our third and fourth strategies are \emph{opportunistic aggregation} and \emph{repetitive scheduling}. Opportunistic aggregation performs simple and effective packet aggregation, enhancing schedulability by up to 97\% and reducing execution time by up to 29\%, in our simulation. 
Repetitive scheduling has negligible execution time, and contributes to minimize communication and storage costs. It reduces the maximum execution time by 92\% and the maximum communication and storage cost by 99\%, in our simulation.
We compare our proposed framework against existing approaches, and evaluate the advantages of our strategies in realistic scenarios.

\end{abstract}

%% file: preamble/keywords.tex

\begin{IEEEkeywords}
TDMA, scheduling, opportunistic aggregation, periodic control, multi-rate systems, repetitive scheduling, wireless sensor-actuator networks.
\end{IEEEkeywords}

%% file: sections/introduction.tex

\section{Introduction} \label{section:introduction}

Industry 4.0 has become a major research focus, promising to revolutionize the industrial landscape by enhancing efficiency, productivity, adaptability, resilience, and overall process performance. A key aspect of this transformation is the shift from wired communication technologies, such as Fieldbus and real-time Ethernet \cite{thomesse2005:fieldbus-technology-industrial-automation, moyne2007:emergence-industrial-control-networks-manufacturing-control-diagnostics-safety-data}, to wireless alternatives, including \gls{TSCH} \cite{tsch12}, WirelessHART \cite{HART07}, and ISA100.11a \cite{ISA08}. Wireless connectivity overcomes the technical and economic constraints of wired systems, enabling rapid facility reconfiguration, reduced deployment and maintenance costs, and support for mobility-driven applications such as mobile robots \cite{rodriguez2021:5g-swarm-production-advanced-industrial-manufacturing-concepts-enabled-wireless-automation}, thereby paving the way for autonomous, smarter factories.

While wireless technologies offer substantial advantages, they also introduce challenges. Unlike the physical isolation of wired networks, wireless systems share the propagation medium, leading to potential interference and reliability concerns. Wireless solutions also tend to have lower throughput and higher latency due to retransmissions, power limitation, and bandwidth constraints. 

\begin{figure}
	\centering
	\begin{tikzpicture}[auto, thick, scale=0.8, every node/.style={scale=0.8}]
	  \node[cloud, fill=gray!20, cloud puffs=17, cloud puff arc= 110,
	        minimum width=7cm, minimum height=2.9cm, aspect=1] at (0,0) {};
	
	  \foreach \place/\x in {{(-2.5,0.3)/1}, 
	    {(-0.75,-0.7)/4}, {(1,0.5)/5}, {(0.25,0.7)/6},{(2.5,0.4)/9}}
	  \node[sensor] (s\x) at \place {};

	  \foreach \place/\y in {{(-1.2,-0.1)/1}, {(1.8,0.75)/2}, {(0.4,-0.7)/3},{(2.5,-0.4)/4}}
	  \node[actuator] (a\y) at \place {};

	  \foreach \place/\z in {{(0.75,-0.3)/7}, {(1.5,0)/8}, {(-1.75,-0.55)/2}, {(-1.2,0.55)/3}}
	  \node[relay] (r\z) at \place {};
	  
	  \node[controller] (c1) at (-0.05,0) {};

	  \path[thin, dashed] (s1) edge (r2);
	  \path[thin, dashed] (s1) edge (r3);
	  \path[thin, dashed] (r2) edge (r3);
	  \path[thin, dashed] (r3) edge (s6);
	  \path[thin, dashed] (r2) edge (s4);	
	    
	  \path[thin, dashed] (c1) edge (a1);
	  \path[thin, dashed] (c1) edge (s5);
	  \path[thin, dashed] (c1) edge (a3);

	  \path[thin, dashed] (c1) edge (s6);
	  \path[thin, dashed] (c1) edge (r7);
	  \path[thin, dashed] (r7) edge (r8);
	  \path[thin, dashed] (r8) edge (s9);
	  \path[thin, dashed] (r8) edge (a4);
	  \path[thin, dashed] (r8) edge (a2);

  	  \node[sensor] at (4.5,1) {};	 
  	  \node[actuator] at (4.5,0.5) {};	
  	  \node[controller] at (4.5,0) {};
  	  \node[relay] at (4.5,-0.5) {};
  	  \node at (4.5,-1) {--};

  	  \node[anchor=west] at (5,1) {\footnotesize Sensor};
  	  \node[anchor=west] at (5,0.5) {\footnotesize Actuator};
  	  \node[anchor=west] at (5,0) {\footnotesize Controller};
  	  \node[anchor=west] at (5,-0.5) {\footnotesize Relay};
  	  \node[anchor=west] at (5,-1) {\footnotesize Wireless link};
			  	
	\end{tikzpicture}
	\caption{WSAN consisting of one controller and multiple sensors, actuators, and relays.} 
	\label{fig:wsan}
	\vspace{-0.5cm}
\end{figure}






As technology advances, industrial processes become increasingly demanding and sophisticated. The most challenging scenarios will require complex automation relying on numerous sensors and actuators communicating wirelessly, as illustrated in \cref{fig:wsan}. Such a collection of interconnected devices, known as \gls{WSAN}, forms the communication backbone for industrial processes.

A cornerstone of industrial automation is real-time periodic control, implemented through control loops that continuously sense, compute, and actuate at fixed intervals, as shown in Fig. \ref{fig:block-diagram}. Particularly, the network controller computes and broadcasts the data flow schedule to all motes (sensors, relays, and actuators) via gateways. Control packets (controller $ \rightarrow $ actuators) and sensor packets (sensors $ \rightarrow $ controller) are then transmitted according to this schedule, often relayed through relays. 

The heterogeneity of data rates produced by control loops in \glspl{WSAN} significantly increases the risk of packet collisions and complicate the wireless data flow scheduling, ultimately impacting communication performance. Therefore, ensuring reliable, real-time communication is critical. At the same time, there is a pressing need for scheduling algorithms that are not only efficient but also lightweight, minimizing both storage requirements and communication overhead to maintain the scalability of the network. Storage is particularly vital due to the memory constraints of sensors and actuators, while communication costs impact time and frequency resources. These challenges highlight the need for lightweight and efficient scheduling strategies.

Based on the above motivation, it is essential to design a \gls{WSAN} scheduling framework that ensures high schedulability, high reliability, low execution time, low storage constraints, and low communication overhead. To the best of our knowledge, the scheduling of \glspl{WSAN} for multi-rate periodic control systems remains an open problem. Motivated by this, the present work introduces a framework with several novel strategies to meet these objectives.
Our contributions are summarized below:

\begin{figure*}[!t] 
   \centering
   \includegraphics[width=14cm]{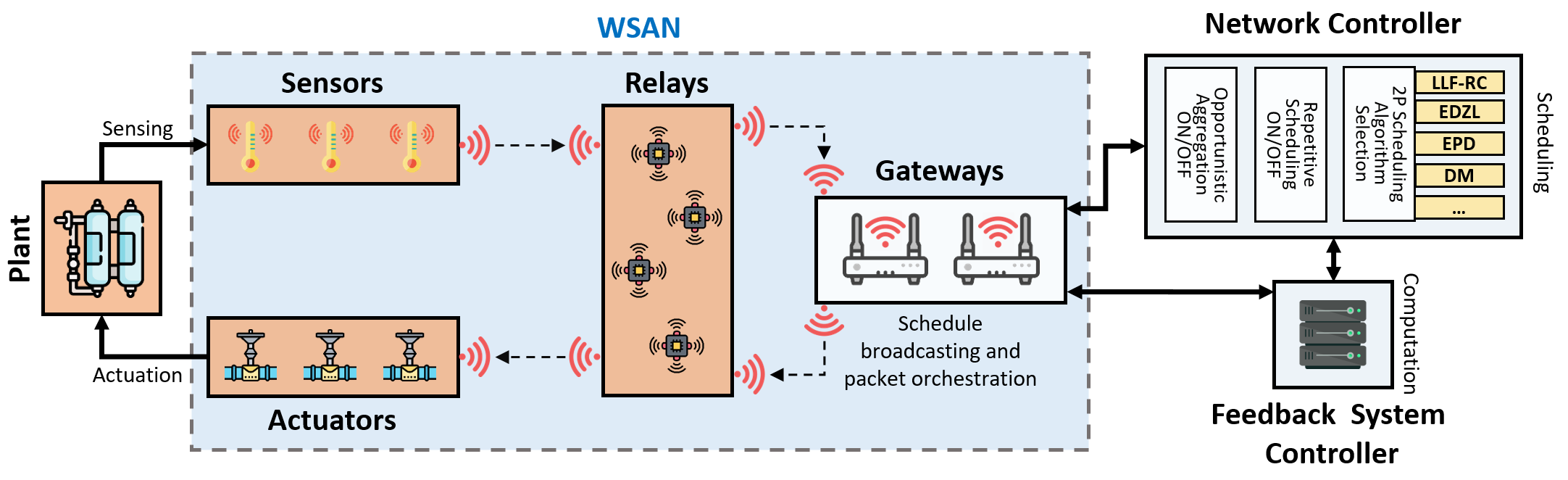}
	\vspace{0.25cm}
	\caption{Block diagram of WSAN and control loop. 
	\emph{Control loops form the core of periodic real-time systems, continuously executing sensing, computation, and actuation cycles. Industrial processes often involve multiple loops with heterogeneous rates, where feedback controllers manage actuators. A network controller schedules the wireless network according to each loop's timing requirements, which is the focus of this paper. We propose novel scheduling algorithms LLF-RC, opportunistic aggregation and repetitive scheduling, which work in synergy and significantly improve network schedulability, algorithm execution time, and memory and communication overhead.
	}
	}
	\label{fig:block-diagram}
	\vspace{-0.5cm}
\end{figure*}

\begin{itemize} [leftmargin = 0.3cm] 

	\item We propose a centralized \gls{TDMA}-based \gls{WSAN} scheduling framework, drawing inspiration from multiprocessor scheduling principles. It adapts these principles to address the unique challenges of \glspl{WSAN}. Our framework includes \emph{two-phase (2P) scheduling}, which is a novel idea that improves reliability compared to traditional scheduling, referred to as \emph{one-phase (1P) scheduling} in this paper. We demonstrate analytically, and through simulations that \emph{2P scheduling} has higher reliability, albeit with an increased delay compared to \emph{1P scheduling}. 
	
	\item We propose the \gls{LLF-RC} algorithm, which improves the scheduling performance by accounting for  laxity to  scheduling deadline and link bottleneck. Through extensive simulations, we demonstrate that \gls{LLF-RC} has higher schedulability and lower execution time compared to other scheduling algorithms.
	
	\item We propose \gls{OA}, a mechanism that interfaces seamlessly with any scheduling algorithm, and significantly increases the schedulability of WSANs.
	
	\item We propose \gls{RS}, a mechanism that can be integrated with any scheduling algorithm on condition that harmonic periods are employed. It is very effective in reducing communication and storage costs, with minimal penalty to schedulability. These characteristics make \gls{RS} ideal for resource-constrained wireless nodes.
	
	\item We adapt existing scheduling algorithms to work within the framework we propose in order to perform a fair comparison.
	
\end{itemize}

This paper is organized as follows: \cref{sec:related-work} reviews related work on wireless scheduling algorithms. \cref{sec:system-model} details the system model, including topology, communication, and routing. \cref{sec:proposed-scheduling-framework} introduces our \emph{2P scheduling} framework, along with the LLF-RC algorithm, \gls{OA}, and \gls{RS}. \cref{sec:extended-framework} adapts various multiprocessor scheduling algorithms to our framework. \cref{sec:simulations} discusses simulation results, \cref{sec:extension-large-net} discusses applicability to large-scale real-world \glspl{WSAN} and \cref{sec:conclusion} presents our conclusions.

\emph{Notation:} In this paper, $ \mathbb{Z}^+ $ denotes the set of positive integers. The functions $ \mathrm{max} \left\lbrace \mathcal{S} \right\rbrace $ and $ \mathrm{min} \left\lbrace \mathcal{S} \right\rbrace $ return the maximum and minimum value of set $ \mathcal{S} $, respectively. The product operator is denoted by $ \Pi \left( \cdot \right) $ and the logarithm in base $ 10 $ of a real number $ x > 0 $ is represented by $ \mathrm{log}_{10} (x) $.

%% file: sections/related-work.tex

\section{Related Work} \label{sec:related-work}

\begin{table*}[!t]
	\fontsize{6}{6}\selectfont
	\setlength\tabcolsep{2.9pt}
	\centering
	\caption{Categorization of relevant related work}
	\label{table_related_literature}
	\begin{tabular}{c c c c c c c c c c c c c c c c c}
		\toprule
		\multirow{3}{*}{\centering{\makecell{Approach}}} &
		\multirow{3}{*}{\centering{\makecell{Type}}} &
		\multirow{3}{*}{\centering{\makecell{Sch.}}} &
		\multirow{3}{*}{\centering{\makecell{Use \\ case}}} & 
		\multirow{3}{*}{\centering{\makecell{Algorithm}}} &
		\multirow{3}{*}{\centering{\makecell{Optional \\ features}}} & 
		\multirow{3}{*}{\centering{\makecell{Spatial \\ reuse}}} & 
		\multirow{3}{*}{\centering{\makecell{Number \\ of flows}}} & 
		\multicolumn{2}{c}{Traffic} & 
		\multicolumn{2}{c}{Deadline} & 
		\multicolumn{5}{c}{Performance} \\ 
		\cmidrule(lr){9-10}
		\cmidrule(lr){11-12}
		\cmidrule(lr){13-17}
		& & & & & & & &
		Periodic & 
		Aperiodic & 
		\makecell{Hard} & 
		\makecell{Soft} & 
		\makecell{Throughput} & 
		\makecell{Reliability} & 
		\makecell{Schedulability} & 
		\makecell{Storage \\ cost} & 
		\makecell{Execution \\ time} \\ 
		\midrule
		\midrule
		
		{\centering \cite{caccamo2002:an-implicit-prioritized-access-protocol-wireless-sensor-networks}} & 
		{\centering C} &
		{\centering 1P} &
		{\centering Generic} &
		{\centering EDF} &
		{\centering N/A} &
		{\centering Yes} &
		{\centering Single} & 
		{\centering Yes} &
		{\centering Yes} & 
		{\centering Yes} & 
		{\centering Yes} & 
		{\centering High} & 
		{\centering Medium} & 
		{\centering Medium} & 
		{\centering Medium} & 
		{\centering Low} \\

		{\centering \cite{chipara2007:real-time-query-scheduling-wireless-sensor-networks}} & 
		{\centering D} &
		{\centering 1P} &
		{\centering Generic} &
		{\centering RTQS \& DCQS} &
		{\centering PQS, NQS, SSQ} &
		{\centering Yes} &
		{\centering Single} & 
		{\centering Yes} &
		{\centering No} & 
		{\centering Yes} & 
		{\centering {No}} & 
		{\centering High} & 
		{\centering High} & 
		{\centering Medium} & 
		{\centering High} & 
		{\centering Medium} \\

		{\centering \cite{saifullah2010:real-time-scheduling-wirelesshart-networks}} & 
		{\centering C} &
		{\centering 1P} &
		{\centering Industry} &
		{\centering CLLF} &
		{\centering N/A} &
		{\centering No} &
		{\centering Multiple} & 
		{\centering Yes} &
		{\centering No} & 
		{\centering Yes} & 
		{\centering No} & 
		{\centering Medium} & 
		{\centering High} & 
		{\centering Very high} & 
		{\centering High} & 
		{\centering Medium} \\

		{\centering \cite{darbandi2019:path-collision-aware-real-time-link-scheduling-tsch-wireless-networks}} & 
		{\centering C} &
		{\centering 1P } &
		{\centering Industry} &
		{\centering PC-LLF} &
		{\centering N/A} &
		{\centering Yes} &
		{\centering Multiple} & 
		{\centering Yes} &
		{\centering No} & 
		{\centering Yes} & 
		{\centering No} & 
		{\centering Low} & 
		{\centering High} & 
		{\centering High} & 
		{\centering High} & 
		{\centering Very high} \\

		{\centering \cite{palattella2012:traffic-aware-scheduling-algorithm-reliable-low-power-multihop-ieee-802-15-4e-networks}} & 
		{\centering C} &
		{\centering 1P} &
		{\centering Industry} &
		{\centering Matching and coloring} &
		{\centering N/A} &
		{\centering Yes} &
		{\centering Multiple} & 
		{\centering Yes} &
		{\centering No} & 
		{\centering Yes} & 
		{\centering No} & 
		{\centering Low} & 
		{\centering High} & 
		{\centering High} & 
		{\centering Medium} & 
		{\centering High} \\

		{\centering \cite{shi2019:transmission-scheduling-periodic-real-time-traffic-ieee-802-15-4e-tsch-based-industrial-mesh-networks}} & 
		{\centering C} &
		{\centering 1P} &
		{\centering Industry} &
		{\centering SPRF} &
		{\centering N/A} &
		{\centering Yes} &
		{\centering Multiple} & 
		{\centering Yes} &
		{\centering No} & 
		{\centering Yes} & 
		{\centering No} & 
		{\centering Low} & 
		{\centering High} & 
		{\centering High} & 
		{\centering High} & 
		{\centering High} \\

		{\centering \cite{chen2019:joint-scheduling-channel-allocation-end-to-end-delay-minimization-industrial-wirelesshart-networks}} & 
		{\centering C} &
		{\centering 1P} &
		{\centering Industry} &
		{\centering HSA} &
		{\centering N/A} &
		{\centering No} &
		{\centering Multiple} & 
		{\centering Yes} &
		{\centering No} & 
		{\centering Yes} & 
		{\centering No} & 
		{\centering Low} & 
		{\centering High} & 
		{\centering High} & 
		{\centering High} & 
		{\centering Very high} \\

		{\centering \cite{park2021:slot-reallocation-rejection-collision-avoidance-autonomous-tsch-networks}} & 
		{\centering A} &
		{\centering 1P} &
		{\centering Generic} &
		{\centering SRCA} &
		{\centering N/A} &
		{\centering Yes} &
		{\centering Single} & 
		{\centering Yes} &
		{\centering No} & 
		{\centering No} & 
		{\centering Yes} & 
		{\centering Low} & 
		{\centering High} & 
		{\centering Medium} & 
		{\centering Medium} & 
		{\centering Low} \\

		{\centering \cite{bommisetty2023:contention-based-proportional-fairness-transmission-scheme-time-slotted-channel-hopping-networks}} & 
		{\centering C} &
		{\centering 1P} &
		{\centering Industry} &
		{\centering CBPF} &
		{\centering N/A} &
		{\centering Yes} &
		{\centering Single} & 
		{\centering Yes} &
		{\centering No} & 
		{\centering No} & 
		{\centering Yes} & 
		{\centering Very high} & 
		{\centering High} & 
		{\centering Very high} & 
		{\centering Very high} & 
		{\centering Very high} \\

		{\centering \cite{yu2022:smartisch-interference-aware-engine-ieee-802-15-4e-based-networks}} & 
		{\centering A} &
		{\centering 1P} &
		{\centering Industry} &
		{\centering SmarTiSCH} &
		{\centering N/A} &
		{\centering Yes} &
		{\centering Single} & 
		{\centering Yes} &
		{\centering No} & 
		{\centering No} & 
		{\centering Yes} & 
		{\centering Low} & 
		{\centering High} & 
		{\centering High} & 
		{\centering High} & 
		{\centering Very Low} \\

		{\centering \cite{moon2022}} & 
		{\centering C} &
		{\centering 1P} &
		{\centering Industry} &
		{\centering AdaptiveHART} &
		{\centering N/A} &
		{\centering No} &
		{\centering Few} & 
		{\centering Yes} &
		{\centering No} & 
		{\centering Yes} & 
		{\centering No} & 
		{\centering High} & 
		{\centering High} & 
		{\centering High} & 
		{\centering Medium} & 
		{\centering Medium} \\

		{\centering \cite{seo2023}} & 
		{\centering C} &
		{\centering 1P} &
		{\centering Industry} &
		{\centering RSP} &
		{\centering N/A} &
		{\centering Yes} &
		{\centering Many} & 
		{\centering Yes} &
		{\centering No} & 
		{\centering Yes} & 
		{\centering No} & 
		{\centering High} & 
		{\centering High} & 
		{\centering Very high} & 
		{\centering High} & 
		{\centering High} \\

		{\centering \cite{kim2024}} & 
		{\centering C} &
		{\centering 1P} &
		{\centering Industry} &
		{\centering ASAP} &
		{\centering N/A} &
		{\centering No} &
		{\centering Many} & 
		{\centering Yes} &
		{\centering No} & 
		{\centering Yes} & 
		{\centering No} & 
		{\centering High} & 
		{\centering High} & 
		{\centering Medium} & 
		{\centering High} & 
		{\centering Medium} \\

		\midrule
		
		{\centering Proposed} & 
		{\centering C} &
		{\centering 2P} &
		{\centering Industry} &
		{\centering LLF-RC} &
		{\centering -} &
		{\centering No} &
		{\centering Multiple} & 
		{\centering Yes} &
		{\centering No} & 
		{\centering Yes} & 
		{\centering No} & 
		{\centering Low} & 
		{\centering High} & 
		{\centering Medium} & 
		{\centering High} & 
		{\centering Medium} \\
		
		{\centering Proposed} & 
		{\centering C} &
		{\centering 2P} &
		{\centering Industry} &
		{\centering LLF-RC} &
		{\centering {OA}} &
		{\centering No} &
		{\centering Multiple} & 
		{\centering Yes} &
		{\centering No} & 
		{\centering Yes} & 
		{\centering No} & 
		{\centering High} & 
		{\centering High} & 
		{\centering High} & 
		{\centering High} & 
		{\centering Medium} \\
		
		{\centering Proposed} & 
		{\centering C} &
		{\centering 2P} &
		{\centering Industry} &
		{\centering LLF-RC} &
		{\centering RS} &
		{\centering No} &
		{\centering Multiple} & 
		{\centering Yes} &
		{\centering No} & 
		{\centering Yes} & 
		{\centering No} & 
		{\centering Low} & 
		{\centering High} & 
		{\centering Medium} & 
		{\centering Low} & 
		{\centering Low} \\
		
		{\centering Proposed} & 
		{\centering C} &
		{\centering 2P} &
		{\centering Industry} &
		{\centering LLF-RC} &
		{\centering OA+RS} &
		{\centering No} &
		{\centering Multiple} & 
		{\centering Yes} &
		{\centering No} & 
		{\centering Yes} & 
		{\centering No} & 
		{\centering High} & 
		{\centering High} & 
		{\centering High} & 
		{\centering Low} & 
		{\centering Low} \\

		\bottomrule
		\multicolumn{5}{l}{\tiny C: Centralized ~~~~ D: Distributed ~~~~ A: Autonomous}

	\end{tabular}
	\label{tab:related-literature}
	\vspace{1mm}\\
\end{table*}

The most prominent \gls{WSAN} protocols,  including \gls{TSCH} \cite{tsch12}, WirelessHART \cite{HART07}, and ISA100.11a \cite{ISA08}, build on the IEEE 802.15.4 standard \cite{DOT15_4}, which uses \gls{TDMA} across multiple channels. While similar at the physical layer, they differ in channel reuse policies. Specifically, \gls{TSCH} allows it, whereas WirelessHART and ISA100.11a do not. All three protocols allow custom scheduling designs, which have led to a broad literature on scheduling algorithms.

Scheduling approaches can be categorized as centralized, distributed, or autonomous. Centralized algorithms (e.g., \cite{saifullah2010:real-time-scheduling-wirelesshart-networks, palattella2012:traffic-aware-scheduling-algorithm-reliable-low-power-multihop-ieee-802-15-4e-networks, jin2016:centralized-scheduling-algorithm-ieee-802-15-4e-tsch-industrial-low-power-wireless-networks}) rely on a controller with global knowledge. Distributed solutions (e.g., \cite{tinka2011:decentralized-scheduling-algorithm-time-synchronized-channel-hopping, accettura2013:decentralized-traffic-aware-scheduling-multi-hop-low-power-lossy-networks-internet-things, palattella2016:on-the-fly-bandwidth-reservation-6tisch-wireless-industrial-networks, gomes2017:mabo-tsch-multihop-blacklist-based-optimized-time-synchronized-channel-hopping, hamza2019:enhanced-minimal-scheduling-function-ieee-802-15-4e-tsch-networks, domingo2016:distributed-pid-based-scheduling-6tisch-networks}) requires individual nodes to compute partial schedules by exchanging information with neighbors. Autonomous methods (e.g., \cite{duquennoy2015:orchestra-robust-mesh-networks-autonomously-scheduled-tsch, kim2019:alice-autonomous-link-based-cell-scheduling-tsch}) let nodes derive schedules locally, often based on routing information. Centralized algorithms excel at optimizing periodic traffic, while distributed and autonomous algorithms are more resilient to changes in network topology and traffic patterns. In the following, we revisit the most relevant works.

Early works in networked control, such as \cite{caccamo2002:an-implicit-prioritized-access-protocol-wireless-sensor-networks}, addressed deadline-aware scheduling in cell-based networks, combining \gls{EDF} scheduling within cells and priority-based inter-cell coordination. Unused slots from hard-deadline flows were opportunistically reassigned to aperiodic traffic. Similarly, \gls{RTQS} \cite{chipara2007:real-time-query-scheduling-wireless-sensor-networks} tackled scheduling over arbitrary interference graphs, introducing three variants, namely, preemptive (PQS), non-preemptive (NQS), and slack stealing query scheduling (SSQ), that balanced throughput and deadline compliance.

Closer to our focus, a large body of work has investigated scheduling multiple real-time flows, where each flow carries deadline constraints \cite{saifullah2010:real-time-scheduling-wirelesshart-networks, saifullah2011:end-to-end-delay-analysis-fixed-priority-scheduling-wirelesshart-networks, saifullah2011:priority-assignment-real-time-flows-wirelesshart-networks, shi2019:transmission-scheduling-periodic-real-time-traffic-ieee-802-15-4e-tsch-based-industrial-mesh-networks, darbandi2019:path-collision-aware-real-time-link-scheduling-tsch-wireless-networks, chen2019:joint-scheduling-channel-allocation-end-to-end-delay-minimization-industrial-wirelesshart-networks, park2021:slot-reallocation-rejection-collision-avoidance-autonomous-tsch-networks, bommisetty2023:contention-based-proportional-fairness-transmission-scheme-time-slotted-channel-hopping-networks, yu2022:smartisch-interference-aware-engine-ieee-802-15-4e-based-networks,  kim2019:alice-autonomous-link-based-cell-scheduling-tsch, moon2022, seo2023, kim2024}.

Specifically, \cite{saifullah2010:real-time-scheduling-wirelesshart-networks} mapped control loops to flows, proving WirelessHART scheduling is NP-hard and proposing an optimal branch-and-bound method. However, its computational complexity limited scalability. The lower-complexity \gls{CLLF} algorithm was subsequently introduced, offering a better performance-complexity tradeoff than \cite{saifullah2010:real-time-scheduling-wirelesshart-networks}, and significantly outperforming \gls{DM} and \gls{PDM} algorithms. Later, \cite{darbandi2019:path-collision-aware-real-time-link-scheduling-tsch-wireless-networks} proposed \gls{PC-LLF}, prioritizing path-level conflicts over next-link conflicts, achieving higher schedulability than \gls{CLLF}. Besides, \cite{saifullah2011:end-to-end-delay-analysis-fixed-priority-scheduling-wirelesshart-networks} analyzed end-to-end delays for fixed-priority \gls{DM} and \gls{PDM}, while \cite{saifullah2011:priority-assignment-real-time-flows-wirelesshart-networks} presented worst-case delay analysis and a heuristic to minimize scheduling overhead.

\Gls{TASA} \cite{palattella2012:traffic-aware-scheduling-algorithm-reliable-low-power-multihop-ieee-802-15-4e-networks} minimized slots through matching and coloring, inspiring \gls{SPRF} \cite{shi2019:transmission-scheduling-periodic-real-time-traffic-ieee-802-15-4e-tsch-based-industrial-mesh-networks}, which prioritized periodic flows but incurred high computation overhead due to repeated matchings.

Other works tackled flow delay minimization. For instance, \gls{HSA} \cite{chen2019:joint-scheduling-channel-allocation-end-to-end-delay-minimization-industrial-wirelesshart-networks} modeled scheduling as a multi-dimensional conflict graph problem, solved via approximations but at growing computational cost. In contrast, \gls{SRCA} algorithm \cite{park2021:slot-reallocation-rejection-collision-avoidance-autonomous-tsch-networks} autonomously updated schedules at each node using parent-child relations. By preemptively changing a child node's slot after a successful transmission, \gls{SRCA} avoided future collisions, boosting reliability.

Efforts like \gls{CBPF} \cite{bommisetty2023:contention-based-proportional-fairness-transmission-scheme-time-slotted-channel-hopping-networks} applied convex optimization to reduce collisions, while SmarTiSCH \cite{yu2022:smartisch-interference-aware-engine-ieee-802-15-4e-based-networks} enabled passive interference detection without additional control overhead. 
Further enhancing adaptability, AdaptiveHART \cite{moon2022} dynamically adjusted scheduling and transmission priorities in response to changing traffic conditions. Unlike traditional static WirelessHART, it optimized time slot allocation, reducing latency and packet loss while improving resource utilization. Similarly, \gls{RSP} \cite{seo2023} enhanced efficiency by allowing flows to share paths without interference, minimizing delays and contention. For finer-grained adaptation, \gls{ASAP} \cite{kim2024} dynamically adjusted slot sizes and aggregateed smaller packets when feasible. By prioritizing critical data and reducing overhead, \gls{ASAP} allowed balancing efficiency and reliability, making it particularly effective in highly dynamic environments. Together, these algorithms, \gls{CBPF}, SmarTiSCH, AdaptiveHART, \gls{RSP}, and \gls{ASAP}, demonstrated a progression from collision mitigation to adaptive scheduling, addressing key  challenges in real-time industrial wireless networks.

While settings like vehicular networks and smart grids share some similarities with industrial \glspl{WSAN}, they often differ in complexity and specific requirements. These environments may lack actuators, omit multi-rate control loops, or impose less stringent demands. Nevertheless, we revisit relevant works in these domains to provide broader context.

In vehicular safety networks, vehicles broadcast messages to share location and maneuver data. Several studies have explored such scenarios \cite{vehicular1, vehicular2, vehicular3, vehicular4}, where scheduling plays a crucial role in ensuring high reliability and schedulability. While received messages may influence actuator control within vehicles, resembling certain aspects of \gls{WSAN}, scheduling is typically centralized at base stations with substantially greater computational resources than industrial gateways. Moreover, safety-critical vehicular networks operate as single-hop systems to minimize latency, eliminating the need for onboard scheduling storage. Based on these facts, execution time and storage cost are less critical considerations compared to \glspl{WSAN}. Additionally, unlike \glspl{WSAN}, vehicle controllers and actuators usually connect via wired interfaces, reducing failure points and improving overall reliability.

Smart grids employ \glspl{WSAN} under stringent operational constraints, comparable to industrial settings. In these scenarios, \glspl{WSAN} are used to monitor and control critical grid infrastructure, enabling seamless communication between distributed nodes for fault detection, load balancing, and real-time control. As such, effective scheduling mechanisms are essential for ensuring reliability and performance \cite{grid1,grid2,grid3}. However, due to the critical nature of smart grids, wireless technology is typically limited to small subnetworks, while the most critical functions are still predominantly managed through wired connections.

While inspired by industrial applications, our proposed framework addresses challenges not tackled by prior works, including those in vehicular and smart grid contexts.
For clarity, \cref{tab:related-literature} summarizes the discussed studies, highlighting their advantages and disadvantages. Other relevant works worth mentioning include \cite{lee2021:link-based-autonomous-cell-scheduling-ieee-802.15.4e-tsch-improved-traffic-throughput, sinha2021:deadline-aware-scheduling-maximizing-information-freshness-industrial-cyber-physical-system, kim2024:slot-size-adaptation-utility-based-packet-aggregation-ieee-802-15-4e-time-slotted-communication-networks, ergen2010:tdma-scheduling-algorithms-wireless-sensor-networks, yuan2012:tree-based-multi-channel-convergecast-wireless-sensor-networks}.

%% file: sections/system-model.tex

\section{System Model} \label{sec:system-model}

This section introduces the link quality, network topology, communication, and routing models for the considered \gls{WSAN}.

\subsection{Link Quality Model} \label{sec:link-quality-model}

Simulating link quality requires the use a realistic radio propagation model. Therefore, we employ the log-normal path-loss model due to its accuracy to depict large-scale fading, which is typical in industrial environments \cite{rappaport1989:uhf-fading-factories, zuniga2004:analyzing-transitional-region-low-power-wireless-links, tanghe2008:industrial-indoor-channel-large-scale-temporal-fading-900-2400-5200-mhz}. The model, expressed in decibels (dB), is defined as 
\begin{equation}
	\mathrm{PL}(d) = \mathrm{PL}(d_0) + 10 \eta \cdot \log_{10} \left( \frac{d}{d_0} \right) + X_{\sigma},
\end{equation} 
where $ \mathrm{PL}(d) $ is the path-loss of the signal strength at distance $ d $, $ \mathrm{PL}(d_0) $ is the path-loss at reference distance $ d_0 $, $\eta$ is the path-loss exponent, and $ X_{\sigma} $ is a zero-mean Gaussian random variable, which represents shadowing, and has standard deviation $ \sigma $ \cite{rappaport2001:wireless-communications-principles-practice}. The received power $ P(d) $ at distance $ d $ is given by
\begin{equation}
	P(d) = P_\mathrm{tx} - \mathrm{PL}(d),
\end{equation}
where $ P_\mathrm{tx} $ is the transmit power. Additionally, the \gls{SNR} is defined as
\begin{equation}
	\gamma = P(d) - P_\mathrm{n},
\end{equation}
where $ P_\mathrm{n} $ is the noise floor. Furthermore, the \gls{SER} is expressed as
\begin{equation}
	\mathrm{SER} = \frac{1}{2} \mathrm{erfc} \bigg( \frac{\beta_1  (\gamma - \beta_2)}{\sqrt{2}} \bigg),
\end{equation}
based on the empirically determined TOSSIM model \cite{levis2003:tossim-accurate-scalable-simulation-entire-tiny-os-applications}, which is employed to characterize the \gls{PRR}, given by
\begin{equation} \label{eqn:prr}
	\mathrm{PRR} = \left( 1 - \mathrm{SER} \right)^{2 L },
\end{equation}
where $ L $ is the packet length measured in bytes.

\subsection{Network Topology Model}

We model the topology of the wireless network as an undirected simple graph $ \mathcal{G = (V, E)} $, meaning the graph has no loops nor parallel edges. Set $ \mathcal{V} $ indexes the nodes while set $ \mathcal{E} $ indexes the undirected edges, representing the links between every pair of nodes. The nodes in the graph can represent gateways or motes. A mote can be a sensor, an actuator, or a pure relay, with the ability to relay packets. Thus, the set of nodes $ \mathcal{V} $ is defined as $ \mathcal{V} = \mathcal{N} \cup \mathcal{M} $, where $ \mathcal{N} $ is the set of motes, and $ \mathcal{M} $ is the set of gateways. Specifically, every edge $ e \in \mathcal{E} $ has a link quality given by the \gls{PRR}, defined in (\ref{eqn:prr}) and denoted by $ \mathrm{PRR}(e) \in \left[ 0, 1 \right] $.

\subsection{Communication Model}

In the following, we define relevant aspects of the communication model.

{\textbf{Multiplexing and duplexing.}} We represent the communication model as a discrete-time system as we rely on TDMA. We assume that a unit time is equal to the duration of a slot, and that transmitting a packet via a link requires one slot. Also, each gateway or mote is equipped with a half-duplex radio that cannot transmit and receive simultaneously.

{\textbf{Link.}} We assume that gateways are connected with perfect link quality and zero delay. This is a realistic assumption, since gateways in industrial settings are access points connected via fast and redundant wired links. However, the link between a gateway and a mote, or between two motes, is assumed to be imperfect because it is wireless. Specifically, the link $ e $ between two wireless nodes is viable only if its quality is greater than a threshold $ \Gamma_\mathrm{th} $, i.e. $ \mathrm{PRR}(e) \geq \Gamma_\mathrm{th} $. In addition, spatial reuse of channels is disabled as in WirelessHART, facilitating interference mitigation by preventing concurrent transmissions over the same channels.

{\textbf{Flow.}} We consider a system with multiple control loops. The controllers governing the control loops are located at the gateways, and the communication of a loop is called a flow. Each flow is activated periodically, and has a sensor and an actuator. In particular, the sensor samples, and transmits at times $ k \cdot p_f $, where $ k = \left\lbrace 0, 1, 2, ... \right\rbrace $, and $ p_f $ is the period of flow $ f $. A flow can have multiple routing paths, each starting at a sensor and ending at an actuator, going through a controller as shown in \cref{fig:routing}. Specifically, $ \pi_i^\mathrm{sc} $, $ i = \left\lbrace 0, \dots, n-1 \right\rbrace $, represents the $ i $-th sensor-to-controller path (sc-path) whereas $ \pi_j^\mathrm{ca} $, $ j = \left\lbrace 0, \dots, m-1 \right\rbrace $, represents the $ j $-th controller-to-actuator path (ca-path). In addition, $ n $ and $ m $ denote the number of sa- and ca-paths, respectively.

{\textbf{Flow delay.}} We employ the constant network-induced delay model \cite{zhang2001:stability-networked-control-systems}. In this model, there is no need to differentiate the sensor-to-controller delay $ d^\mathrm{sc} $ and the controller-to-actuator delay $ d^\mathrm{ca} $, if the control laws are time-invariant, which is true for our system. Hence, we use the \emph{sum flow delay} $ d = d^\mathrm{sc}  + d^\mathrm{ca} $ for analyzing the communication schedulability. The flow delay needs to satisfy $ d \le d_\mathrm{max} $, where $d_\mathrm{max}$ is the maximum allowable delay, determined by stability analysis and feedback loop performance.

{\textbf{Flow reliability.}} To guarantee stability and satisfactory performance for a feedback loop, the flow reliability $ r^f $ of flow $ f $ needs to be greater than a lower bound $ r_\mathrm{min} $, i.e. $ r^f \ge r_\mathrm{min} $. For control applications, usually $ r_\mathrm{min} $ does not need to be as high as in monitoring applications \cite{zhang2001:stability-networked-control-systems, marco2010:trend-timely-reliable-energy-efficient-dynamic-wsn-protocol-control-applications}.

\subsection{Routing Model} \label{sec:routing-model}
We consider multi-path routing for flow forwarding. In particular, when the reliability requirement of a flow $ f $ cannot be satisfied by a single sensor-to-actuator path (sa-path), i.e., $ r^f \leq r_\mathrm{min} $, multi-path routing helps to address this limitation. A flow $f$ is split into two parts: one comprises the paths starting at the sensor and ending at a controller, and the other comprises the paths starting at a controller and ending at the actuator. Specifically, a flow $ f $ is divided into a sensor-to-controller subflow (sc-flow), denoted by $f^\mathrm{sc} $, and a controller-to-actuator subflow (ca-flow), denoted by $ f^\mathrm{ca} $. The paths for $ f^\mathrm{sc} $ are called sc-paths, while the paths for $ f^\mathrm{ca} $ are called ca-paths, as shown in \cref{fig:routing}. When the sensor data arrives at the controller through an sc-path, the control algorithm is ready to be executed. Once the control algorithm is executed, the output is ready to be sent to the actuator via a ca-path.
\begin{figure}[!t]
	\centering
	\scalebox{1} {
	\begin{tikzpicture}
		[GW/.style={rectangle,draw=blue!50,fill=blue!20,thick,
		inner sep=0pt,minimum size=2mm},
		node/.style={circle,draw=black!50,fill=black!20,thick,
		inner sep=0pt,minimum size=2mm}]
		\node at ( 0,0) [GW] (GW) {};
		\node at ( -3.5,0) [node] (src) {};
		\node at (3.5,0) [node] (dest) {};
		\draw [->,out=40,in=140] (src) to node [above] {$ \pi_0^\mathrm{sc} $} (GW);
		\draw [dashed,->,out=0,in=180] (src) to node [above] {$ \pi_{i}^\mathrm{sc} $} (GW);
		\draw [->,out=-40,in=-140] (src) to node [above] {$ \pi_{n-1}^\mathrm{sc} $} (GW);
		\draw [->,out=40,in=140] (GW) to node [above] {$ \pi_0^\mathrm{ca} $} (dest);
		\draw [dashed,->,out=0,in=180] (GW) to node [above] {$ \pi_j^\mathrm{ca} $} (dest);
		\draw [->,out=-40,in=-140] (GW) to node [above] {$ \pi_{m-1}^\mathrm{ca} $} (dest);
		\node [above] at (GW.north) {any controller};
		\node [above] at (src.north) {sensor};
		\node [above] at (dest.north) {actuator};
	\end{tikzpicture}
	}
\caption{A flow from sensor to actuator with $ n $ sc-paths $ \pi_0^\mathrm{sc}, \dots, \pi_{n-1}^\mathrm{sc} $ and $ m $ ca-paths $\pi_0^\mathrm{ca}, \dots, \pi_{m-1}^\mathrm{ca} $. The number of paths $ n $ and $ m $ are not necessarily equal.}
\label{fig:routing}
\vspace{-0.5cm}
\end{figure}

%% file: sections/proposed-framework.tex

\section{Proposed Scheduling Framework} \label{sec:proposed-scheduling-framework}
In this section, we first present our proposed framework. Then, we introduce \emph{2P scheduling} and \gls{LLF-RC} algorithm, devised to improve reliability and schedulability, respectively. In addition, we present \gls{OA} (Opportunistic Aggregation) and \gls{RS} (Repetitive Scheduling). Specifically, \gls{OA} improves schedulability, whereas \gls{RS} reduces execution time, and communication and storage costs. 
{In the end, we discuss the implications of applying the scheduling framework to industrial networks.}

\subsection{Proposed Scheduling Framework} 
Our proposed scheduling framework draws inspiration from the multiprocessor scheduling literature. Specifically, multiprocessor scheduling deals with the problem of assigning a set of tasks to a number of processors such that each task meets a given deadline. In particular, if there are $ C $ processors, the maximum number of tasks that can be simultaneously supported is $ C $. Inspired by this, our TDMA scheduling problem can be formulated as a multiprocessor scheduling problem, where tasks can be of different granularity.
This means that a task is mapped to a flow, an sc- or ca-path, or a link, while a processor is mapped to to an available frequency channel. Thus, given $ C $ channels, \gls{TDMA} scheduling allows at most $C$ transmissions in parallel if spatial reuse is disabled, as assumed in this work. Our \gls{TDMA} scheduling framework has one constraint that makes it different from multiprocessor scheduling: \emph{any two links scheduled in the same slot cannot be in primary conflict \cite{chlamtac1994:making-transmission-schedules-immune-topology-changes-multi-hop-packet-radio-networks}, meaning the links cannot share a common node}\footnote{This constraint is relaxed if \gls{OA} is employed, as explained in \cref{sec:opportunistic-aggregation}.}.

There are two types of algorithms in the multiprocessor scheduling literature: partitioned and global. The former assigns a task statically to a processor, while the latter allows tasks to migrate freely between processors \cite{davis2011:survey-hard-real-time-scheduling-multiprocessor-systems}. In our framework, we adopt global scheduling due to its better performance in deadline satisfaction. Making an analogy to multiprocessor scheduling, task migration is equivalent to scheduling two consecutive transmissions on two different channels, as a processor is mapped to a channel. Therefore, the cost of performing task migration is negligible, as a transceiver can receive a packet on one channel, and send the packet on another channel with almost no additional cost. \cref{alg:framework}, summarizes the proposed scheduling framework, which we explain in more detail in the following. 
\begin{algorithm}
	\KwIn{\\ $ C $: Number of available channels \\ $ B $: Maximum queue length} 
	\KwOut{\\ $ \mathcal{H}$: Scheduling table}
	Perform necessary schedulability test, including deadline check and utilization check\;
	\eIf {schedulability test is passed} {
	Set the hyperperiod length to the least common multiple of all flow periods, $ H = \lcm\{p_f: f \in \mathrm{all~flows} \}$\;
	\For {$ i = 0 $ \KwTo $ H - 1$} {
		Define the set $ \mathcal{J}_i $ with all released transmissions in slot $ i $, whose receiver queue length is lower than $ B $, and ordered by priority\;
		Perform a scheduling $ \mathcal{H}_i $ with at most $C$ non-conflicting transmissions from $ \mathcal{J}_i $, in descending order of priority\;
		\If{deadline is missed} {
			\Return $ \mathcal{H} = \emptyset $;		
		}
		\If{scheduling is complete} {
			\Return $ \mathcal{H} = \left\lbrace \mathcal{H}_j \right\rbrace_{j = 0}^{i} $;			
		}
	  }
	} {
		\Return $ \mathcal{H} = \emptyset $;		
	}
	\caption{Proposed scheduling framework}
	\label{alg:framework}
	\algorithmfootnote{The maximum queue length $ B $ is, in practice, related to the RAM size. Also, $ \mathrm{lcm} \left\lbrace \mathcal{S} \right\rbrace $ represents the least common multiple operation of a set $ \mathcal{S} $ whereas $\mathcal{H} = \emptyset $ indicates that a feasible scheduling was not found.}
\end{algorithm}

First, a schedulability test is performed (see \texttt{line 1}), which consists of a deadline check and a utilization check.

\textbf{Deadline check.} The deadline of a flow cannot be smaller than the minimum flow delay $ d^\mathrm{2P} $. 

\textbf{Utilization check.}  The utilization of a flow $ f $ is defined as $ u_f = \cfrac{\mathrm{hops}(f)}{p_f} $. Specifically, $\mathrm{hops}(f)$ is the number of transmissions needed for each activation of flow $ f $, and $ p_f $ is the period of flow $ f $. The total utilization of all flows should not be larger than the channel count $ C $ \cite{horn1974:some-simple-scheduling-algorithms}, i.e., $ \sum_f u_f \le C $, which is a necessary condition for the existence of a feasible scheduling. Note that we do not apply this check when \gls{OA} is employed.

If the schedulability test is passed (see \texttt{line 2}), scheduling is attempted (see \texttt{lines 4-6}) assuming that for one activation of a flow, the transmissions on the ca-flow are not released until all transmissions on the sc-flow are finished. If the scheduling is complete, then a feasible solution has been found (see \texttt{lines 9-10}). If the scheduling has not been achieved by the deadline, the scheduling is infeasible (see \texttt{lines 7-8}). Although it is theoretically possible for a control loop to have a deadline larger than the period, we impose that the deadline is not larger than the period, because of the following two reasons.

\begin{itemize} [leftmargin = 0.3cm]
	\item If a control loop is feasible for a deadline larger than the period, it is also feasible for a larger period (see Fig. 7 of \cite{zhang2001:stability-networked-control-systems} which shows the relation between stability region, period, and delay).
	
	\item A set of synchronized periodic tasks is schedulable if it is feasible for a hyperperiod, which is defined as the least common multiple of all task periods \cite{cucu2006:feasibility-intervals-fixed-priority-real-time-scheduling-uniform-multiprocessors}. Otherwise, even though the scheduling may still be periodic from some point in time, the point cannot be accurately determined \cite{cucu2007:feasibility-intervals-multiprocessor-fixed-priority-scheduling-arbitrary-deadline-periodic-systems}. 
\end{itemize}

\emph{\textsc{Remark:} Our proposed framework can be applied with any scheduling algorithm. The main difference is how released transmissions are prioritized (see \texttt{lines 5-6}). In particular, each scheduling algorithm handles released transmissions in a different way, resulting in different performances.}

\subsection{Proposed 2P Scheduling} \label{sec:proposed-2p-scheduling}

\cref{fig:1p-2p-scheduling} illustrates \emph{1P scheduling} and the proposed \emph{2P scheduling}. In particular, \emph{2P scheduling} executes the control algorithm only after the transmissions on all or some sc-paths have arrived to the controller by a given time. Then, the controller output is sent on all ca-paths, as shown in Fig. \ref{fig:1p-2p-scheduling-a}. On the contrary, \emph{1P scheduling} manages the communication of each sa-path independently, without synchronization of the controller, as shown  in Fig. \ref{fig:1p-2p-scheduling-b}. Specifically, each sa-path is composed of a sc-path and a ca-path, i.e., $\pi_i = \pi_i^\mathrm{sc}  \cup \pi_i^\mathrm{ca}$. The sa-path concept is not needed in \emph{2P scheduling}, and bears no meaning when the number of sc-paths and ca-paths are different. However, we keep the concept of sa-path in \emph{2P scheduling} only for consistency with \emph{1P scheduling}. In \emph{2P scheduling}, the gateway end-points of the two paths may be different, since we assume full connectivity between gateways. 
\begin{figure}[!h] 
    \centering
    \begin{subfigure}{\columnwidth}
    \includegraphics[width=\columnwidth]{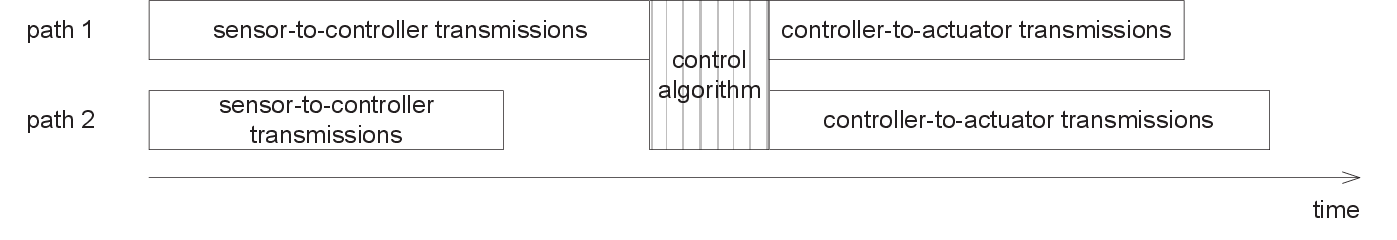}
    \caption{2P scheduling}
    \label{fig:1p-2p-scheduling-a}
    \vspace{0.3cm}
    \end{subfigure}
    \begin{subfigure}{\columnwidth}
    \includegraphics[width=\columnwidth]{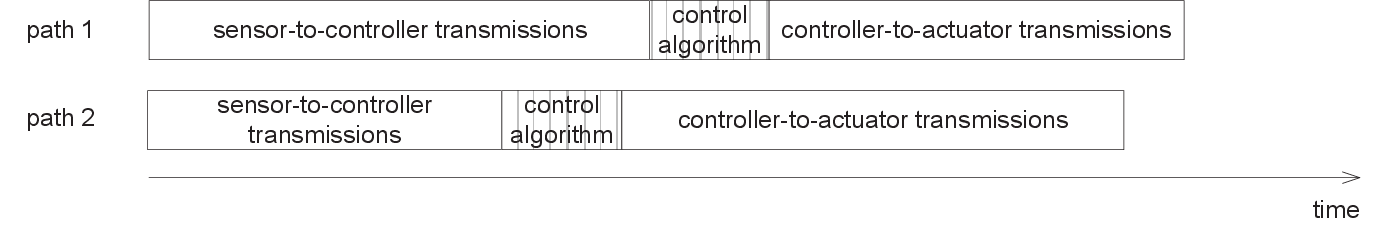}
    \caption{1P scheduling}
    \label{fig:1p-2p-scheduling-b}
    \end{subfigure}
    \caption{\emph{2P scheduling} and \emph{1P scheduling} are depicted for a WSAN with two gateways.}%
    \label{fig:1p-2p-scheduling}
\end{figure}

Assuming that a flow has $ n $ sa-paths, denoted by $ \pi_0, \pi_1, ..., \pi_{n-1} $, we will show that \emph{2P scheduling} provides higher communication reliability than \emph{1P scheduling}.


\begin{thm} \label{thm:reliability} Given a flow with $ n \geq 1 $ sa-paths, the end-to-end communication reliability of 2P scheduling is greater than or equal to that of 1P scheduling, i.e., $r^\mathrm{2P} \ge r^\mathrm{1P}$.
\end{thm}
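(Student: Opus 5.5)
We model each sa-path $\pi_i=\pi_i^\mathrm{sc}\cup\pi_i^\mathrm{ca}$ by the two independent events $S_i$ (the packet traverses $\pi_i^\mathrm{sc}$ successfully) and $A_i$ (the packet traverses $\pi_i^\mathrm{ca}$ successfully). Their probabilities are $s_i=\prod_{e\in\pi_i^\mathrm{sc}}\mathrm{PRR}(e)$ and $a_i=\prod_{e\in\pi_i^\mathrm{ca}}\mathrm{PRR}(e)$. Link transmissions are assumed independent, and distinct sa-paths are assumed to use distinct link transmissions. Since gateways are fully connected with perfect links, the gateway end-points are immaterial.

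Under 1P scheduling the flow succeeds iff some sa-path succeeds end to end, i.e.\ the event is $\bigcup_i (S_i\cap A_i)$. Under 2P scheduling the controller output is sent on all ca-paths once some sc-path has delivered. The flow therefore succeeds iff at least one sc-path and at least one ca-path succeed, i.e.\ the event is $\bigl(\bigcup_i S_i\bigr)\cap\bigl(\bigcup_j A_j\bigr)$.

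The core step is the set inclusion
\begin{equation*}
\bigcup_{i=0}^{n-1}(S_i\cap A_i)\subseteq\Bigl(\bigcup_{i=0}^{n-1}S_i\Bigr)\cap\Bigl(\bigcup_{j=0}^{n-1}A_j\Bigr),
\end{equation*}
which holds trivially: any outcome in $S_i\cap A_i$ lies in both unions. Monotonicity of probability then gives $r^\mathrm{2P}\ge r^\mathrm{1P}$. Because this argument is purely set-theoretic, it needs no independence assumption and covers correlated links as well. Equality holds for $n=1$, and in general exactly when the difference event has probability zero.

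For concreteness we also give closed forms under independence: $r^\mathrm{1P}=1-\prod_i(1-s_ia_i)$ and $r^\mathrm{2P}=\bigl(1-\prod_i(1-s_i)\bigr)\bigl(1-\prod_j(1-a_j)\bigr)$. One could verify the inequality algebraically by induction on $n$, but the inclusion argument makes this unnecessary.

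The main obstacle is formalizing the 2P success event. We must argue that the controller waits for sc-transmissions until a given time that is no earlier than in 1P, so that every sc-success counted by 1P is also available to 2P. We must also argue that the ca-transmissions in 2P are attempted on every ca-path whenever at least one sc-path succeeds. If the scheduling time limits differed between the two schemes, the inclusion could fail. We therefore fix both schemes to use the same path transmissions, with 2P releasing the ca-phase only after all sc-transmissions, as stipulated by the framework.
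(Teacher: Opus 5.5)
Your inclusion argument is correct, and it takes a different route from the paper. The paper's appendix proof works only with the closed forms under independence. With $l_i = 1-r_i^\mathrm{sc}$ and $m_i = 1-r_i^\mathrm{ca}$, it writes $r^\mathrm{1P} = 1-\prod_i(l_i+m_i-l_im_i)$ and $r^\mathrm{2P} = (1-\prod_i l_i)(1-\prod_i m_i)$, splits off the last path $k=n-1$, and expands. The expansion gives $c^\mathrm{1P} = c^\mathrm{2P} - c' - c''$, where $c' = \prod_{i=0}^{k-1} l_i\, m_k(1-l_k)\bigl(1-\prod_{i=0}^{k-1} m_i\bigr) \ge 0$ and $c''$ is symmetric; the paper writes $+c'+c''$ with $c',c''<0$, which is a sign slip. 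Its third line replaces $-\prod_{i=0}^{k-1}(l_i+m_i-l_im_i)$ by the 2P expression for the first $n-1$ paths. That step is an identity only for $n\le 2$; for $n\ge 3$ it holds only as ``$\le$'', using the claim for $n-1$ paths together with $l_k+m_k-l_km_k\ge 0$. So the paper's proof is exactly the induction on $n$ you mention and set aside, with the inductive step left implicit. What the algebra buys is an explicit formula for the gap. For $n=2$, $c' = \Pr(\bar S_0\cap A_0\cap S_1\cap \bar A_1)$ and $c'' = \Pr(S_0\cap \bar A_0\cap \bar S_1\cap A_1)$, and together they are precisely the probability of your difference event: some sc-path and some ca-path succeed, but never with a common index. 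What your argument buys is brevity, validity for all $n$ without induction, no need for independence or path-disjointness, and the equality characterization for free. The one thing to state explicitly in the correlated case is that both schemes are evaluated under a single joint law of the indicators of the $S_i$ and $A_i$ (e.g.\ each link outcome drawn once, irrespective of the slot in which the link is used). Your final paragraph effectively stipulates this.
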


\begin{proof}
The reader is referred to Appendix \ref{app:thm-1}.
\end{proof}

\begin{figure}[!h]
\centering
\scalebox{1} {
\begin{tikzpicture}
	[GW/.style={rectangle,draw=blue!50,fill=blue!20,thick,
	inner sep=0pt,minimum size=2mm},
	node/.style={circle,draw=black!50,fill=black!20,thick,
	inner sep=0pt,minimum size=2mm}]
	\node at ( 0,0) [GW] (GW) {};
	\node at ( -3,0) [node] (src) {};
	\node at (3,0) [node] (dest) {};
	\draw [->,out=40,in=140] (src) to node [above] {$r_0^\mathrm{sc} $} (GW);
	\draw [dashed,->,out=0,in=180] (src) to node [above] {$r_{i}^\mathrm{sc} $} (GW);
	\draw [->,out=-40,in=-140] (src) to node [above] {$r_{n-1}^\mathrm{sc} $} (GW);
	\draw [->,out=40,in=140] (GW) to node [above] {$r_0^\mathrm{ca}$} (dest);
	\draw [dashed,->,out=0,in=180] (GW) to node [above] {$r_i^\mathrm{ca}$} (dest);
	\draw [->,out=-40,in=-140] (GW) to node [above] {$r_{n-1}^\mathrm{ca}$} (dest);
	\node [above] at (GW.north) {controller};
	\node [above] at (src.north) {sensor};
	\node [above] at (dest.north) {actuator};
\end{tikzpicture}
}
\caption{A flow with $ n $ sa-paths $ \pi_0, \dots, \pi_i, ..., \pi_{n-1} $, where $ r_i^\mathrm{sc} $ and $ r_i^\mathrm{ca} $ are the reliabilities of the sc-path and the ca-path of the sa-path $\pi_i$, respectively.}
\label{fig:n-path-flow}
\end{figure}
\begin{figure}[!h] 
   \centering
   \includegraphics[width=8cm]{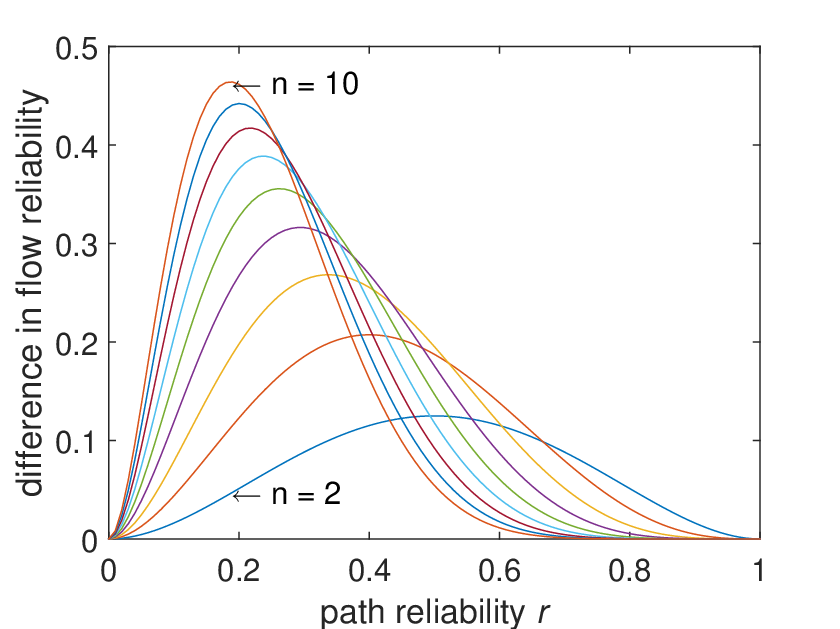}
   \caption{Difference in flow reliability, $r^\mathrm{2P} - r^\mathrm{1P}$, showing that \emph{2P scheduling} outperforms \emph{1P scheduling}.}
   \label{fig:difference-reliability}
\end{figure}

To illustrate the difference in flow reliability between \emph{1P scheduling} and \emph{2P scheduling}, we consider the special case in which all sc- and ca-paths have the same reliability $r$. We illustrate the difference $ r^\mathrm{2P} - r^\mathrm{1P} = [1 - (1 - r)^n]^2  - [1 - (1-r^2)^n] $ in \cref{fig:difference-reliability}, varying $ r \in \left[ 0, 1 \right] $ and $ n \in \{2 \dots 10\} $. We observe that as the number of paths increases, the difference in reliability changes. In particular, the difference increases for smaller values of $ r $, whereas it decreases for larger values of $ r $. Despite \emph{2P scheduling} being more reliable than \emph{1P scheduling}, the minimum delay of the former is larger, as shown next.

\begin{thm}\label{thm:delay} Given the same multi-path routing of a flow, the minimum end-to-end flow delay of 2P scheduling is greater than or equal to that of 1P scheduling, i.e., $ d^\mathrm{2P} \geq d^\mathrm{1P} $.
\end{thm}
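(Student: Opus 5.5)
We prove the claim by writing the minimum delay of each scheme as a function of the hop counts of the paths. We compare these two expressions directly.

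Let $h_i^\mathrm{sc}$ and $h_i^\mathrm{ca}$ be the numbers of hops of $\pi_i^\mathrm{sc}$ and $\pi_i^\mathrm{ca}$. Each transmission takes one slot, and a packet cannot be forwarded before it is received. So, even ignoring every conflict with other flows, a path with $h$ hops needs at least $h$ slots.

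\textbf{1P scheduling.} Each sa-path $\pi_i = \pi_i^\mathrm{sc}\cup\pi_i^\mathrm{ca}$ is handled independently. The actuator can act as soon as the first copy arrives. The minimum delay is therefore the length of the shortest sa-path:
\begin{equation*}
d^\mathrm{1P} = \min_{i}\left\lbrace h_i^\mathrm{sc}+h_i^\mathrm{ca}\right\rbrace.
\end{equation*}
This value is attained by scheduling that path's links in consecutive slots. Consecutive links on a path share a node, so they must occupy different slots anyway, and the half-duplex constraint is respected.

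\textbf{2P scheduling.} The controller runs only after the sc-phase is completed. By the framework in \cref{alg:framework}, ca-transmissions are not released until all sc-transmissions of the activation have finished. The sc-phase alone therefore needs at least $\max_i h_i^\mathrm{sc}$ slots; under the weaker ``some sc-paths'' variant it needs at least $\min_i h_i^\mathrm{sc}$. The ca-phase then needs at least $\min_j h_j^\mathrm{ca}$ more slots for the first copy to reach the actuator. This gives
\begin{equation*}
d^\mathrm{2P} \ge \min_i h_i^\mathrm{sc} + \min_j h_j^\mathrm{ca}.
\end{equation*}

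\textbf{Comparison.} We now show the 2P bound is at least $d^\mathrm{1P}$. Both delays are taken over the same routing, so the ca-paths in the 2P bound are the $\pi_j^\mathrm{ca}$ components of the same sa-paths. Take $i^*$ minimizing $h_i^\mathrm{sc}$ and $j^*$ minimizing $h_j^\mathrm{ca}$. In 1P the actuator receives the output of sensor data carried on a particular sa-path. That sa-path's delay is at most $h_{i^*}^\mathrm{sc}+h_{i^*}^\mathrm{ca}$.

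Here a direct comparison $h_{i^*}^\mathrm{sc}+h_{j^*}^\mathrm{ca}\ge \min_i(h_i^\mathrm{sc}+h_i^\mathrm{ca})$ does not hold in general, since the minimizers may differ. I will resolve this in one of two ways, whichever matches the paper's definition of 2P:
\begin{itemize}
\item Use the stricter 2P rule that all sc-transmissions must finish first, together with the paper's default of pairing paths. Then $\max_i h_i^\mathrm{sc} + \min_j h_j^\mathrm{ca} \ge h_{j^*}^\mathrm{sc}+h_{j^*}^\mathrm{ca}\ge d^\mathrm{1P}$.
\item Alternatively, observe that in 1P each sc-path may be paired with any ca-path. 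This makes the 1P set of achievable end-to-end paths a superset of the 2P combinations. Any 2P timeline then induces a 1P timeline with no larger delay, because 1P drops the synchronization barrier at the controller.
\end{itemize}

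I expect the main obstacle to be this step: pinning down exactly what 2P waits for, and ensuring that the pairing of sc- and ca-paths matches between the two schemes. I will try the first resolution first, since the framework text states that ca-transmissions are released only after all sc-transmissions finish.
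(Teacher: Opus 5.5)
Your first resolution is correct and is essentially the paper's proof. The paper also sets $d^\mathrm{2P}=\max_i h_i^\mathrm{sc}+\min_i h_i^\mathrm{ca}$ and $d^\mathrm{1P}=\min_i\{h_i^\mathrm{sc}+h_i^\mathrm{ca}\}$, and reduces the claim to $\min_i\{w_i+z_i\}\le\max_i w_i+\min_i z_i$; it derives this through a more roundabout chain of inequalities than your one-line choice of the index $j^*$ minimizing $h_j^\mathrm{ca}$.

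Discard the $\min_i h_i^\mathrm{sc}$ variant and the second resolution, though. 1P uses the fixed pairs $\pi_i=\pi_i^\mathrm{sc}\cup\pi_i^\mathrm{ca}$, with no free re-pairing of sc- and ca-paths. Moreover, if 2P could proceed after the first sc-copy, the statement would fail: $(h_0^\mathrm{sc},h_0^\mathrm{ca})=(1,3)$ and $(h_1^\mathrm{sc},h_1^\mathrm{ca})=(3,1)$ give $d^\mathrm{1P}=4$ but $1+1=2$. Waiting for the longest sc-path is therefore essential, not a matter of choice.
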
	

\begin{proof} 
The reader is referred to Appendix \ref{app:thm-2}.
\end{proof}

\emph{\textsc{Remark:} Although scheduling feasibility is related to both reliability (see \cref{thm:reliability}) and delay (see Theorem \ref{thm:delay}), reliability is more relevant when deadlines are not overly tight, especially in control applications such as our case. In particular, if the scheduling achieves a delay within the required deadline, it fulfills the control objectives without affecting performance.}

\subsection{Proposed LLF-RC Algorithm} \label{sec:llf-rc}
The \gls{LLF} algorithm was developed in \cite{liu2000:real-time-systems}, being its core idea the prioritization of jobs with smaller laxity. We propose the \gls{LLF-RC} algorithm, which extends \gls{LLF}'s idea by prioritizing flows based on the number of remaining conflicts when laxities are equal. In particular, the laxity $ a (\cdot) $ of a released transmission $ \tau $ is defined as its absolute deadline $d_\mathrm{abs}(\tau)$ minus the current time $t$, as shown below
\begin{align}
	a (\tau) = d_\mathrm{abs}(\tau) - t  \label{eqn:laxity}.
\end{align}

The absolute deadline of a released transmission $\tau$ is computed as the absolute deadline of the $k$-th activation of path $ d_\mathrm{abs}^{\mathrm{path}}(\pi_i^{x}, k) $ minus the number of hops ahead on that path, i.e., $ n_\mathrm{ahead}(\pi_i^{x}) $. Mathematically, we express it as
\begin{align} \nonumber
	 d_\mathrm{abs}(\tau) & = d_\mathrm{abs}^{\mathrm{path}}(\pi_i^{x}, k) - n_\mathrm{ahead}(\pi_i^{x}) \\ 
	                      & = k \cdot p_f + d_\mathrm{rel}^\mathrm{path}(\pi_i^{x}) - 1 - n_\mathrm{ahead}(\pi_i^{x})
	\label{eq:abs_dl_tx},  
\end{align}
where $ x \in \left\lbrace \mathrm{sc}, \mathrm{ca} \right\rbrace $ and $ p_f $ is the flow period. The relative path deadline $ d_\mathrm{rel}^\mathrm{path}(\pi_i^{x}) $ for sc- and ca-paths are defined below
\begin{align} 
	d_\mathrm{rel}^\mathrm{path}(\pi_i^\mathrm{sc}) & = d_\mathrm{rel}^\mathrm{flow} (f) - \underset{j}{\mathrm{max}} \{\mathrm{hops}(\pi_j^\mathrm{ca})\}, \label{eqn:relative-path-deadline-sc} \\ 
	d_\mathrm{rel}^\mathrm{path}(\pi_i^\mathrm{ca}) & = d_\mathrm{rel}^\mathrm{flow} (f). \label{eqn:relative-path-deadline-ca}
\end{align}

If two released transmissions have the same laxity, \gls{LLF-RC} assigns a higher priority to the transmission that has a larger number of remaining conflicting transmissions. The reason is that such links are likely to become bottlenecks, which may impair transmission parallelism \cite{yuan2012:tree-based-multi-channel-convergecast-wireless-sensor-networks}. We denote the number of remaining conflicting transmissions by $ n^\mathrm{rem}(\tau) $, and define it as the total number of unscheduled transmissions in the set of conflicting links of $ \tau $.
The set of conflicting links of $\tau$ is composed of all links that share at least one node with the link of $ \tau $, including the link of $ \tau $ itself.

To illustrate \gls{LLF-RC} algorithm, we provide an example in \cref{fig:llf-rc-system-example}. We assume a network with $ 13 $ motes and $ 2 $ gateways. In particular, the motes consist of $ 2 $ sensors denoted by $ s_0, s_1 $; $ 2 $ actuators denoted by $ a_0, a_1 $; and $ 9 $ relays denoted by $ r_0, \dots, r_8 $. The gateways are denoted by $ g_0, g_1 $. There are two scheduled flows, $ f_0 $ (from sensor $s_0$ to actuator $a_0$) and $f_1$ (from sensor $s_1$ to actuator $a_1$). Here, $f_0$ has two sa-paths ($\pi_{00} = \pi_{00}^\mathrm{sc} \cup \pi_{00}^\mathrm{ca}$ and $\pi_{01} = \pi_{01}^\mathrm{sc} \cup \pi_{01}^\mathrm{ca}$), whereas $f_1$ has one sa-path ($\pi_{10} = \pi_{10}^\mathrm{sc} \cup \pi_{10}^\mathrm{ca}$). The periods and deadlines of $f_0$ and $f_1$ are $ p_0 = 10, d_0 = 10 $ and $ p_1 = 20, d_1 = 9 $, respectively. Also, two channels are available, i.e., $ C = 2 $.
\begin{figure}[!h]
	\centering
	\begin{tikzpicture}
		\draw (0,0) grid +(4,2);
		\node at (0,0) [] (a1) {};
		\node at (1,0) [] (g) {};
		\node at (2,0) [] (h) {};
		\node at (3,0) [] (i) {};
		\node at (4,0) [] (a0) {};
		\node at (0,1) [] (d) {};
		\node at (1,1) [] (g0) {};
		\node at (2,1) [] (e) {};
		\node at (3,1) [] (g1) {};
		\node at (4,1) [] (f) {};
		\node at (0,2) [] (s0) {};
		\node at (1,2) [] (a) {};
		\node at (2,2) [] (b) {};
		\node at (3,2) [] (c) {};
		\node at (4,2) [] (s1) {};

		\node [above left] at (a) {$r_0$};
		\node [above left] at (b) {$r_1$};
		\node [above left] at (c) {$r_2$};
		\node [above left] at (s0) {$s_0$};
		\node [above left] at (s1) {$s_1$};
		\node [above left] at (d) {$r_3$};
		\node [above left] at (g0) {$g_0$};
		\node [above left] at (e) {$r_4$};
		\node [above left] at (g1) {$g_1$};
		\node [above left] at (f) {$r_5$};
		\node [above left] at (a1) {$a_1$};
		\node [above left] at (g) {$r_6$};
		\node [above left] at (h) {$r_7$};
		\node [above left] at (i) {$r_8$};
		\node [above left] at (a0) {$a_0$};

		\fill[color = black] (1,1) circle (2pt);
		\fill (3,1) circle (2pt);
		
		\draw[fill = white] (0, 0) circle (2pt);		
		\draw[fill = white] (1, 0) circle (2pt);
		\draw[fill = white] (2, 0) circle (2pt);
		\draw[fill = white] (3, 0) circle (2pt);
		\draw[fill = white] (4, 0) circle (2pt);
		\draw[fill = white] (0, 1) circle (2pt);
		\draw[fill = white] (2, 1) circle (2pt);
		\draw[fill = white] (4, 1) circle (2pt);
		\draw[fill = white] (0, 2) circle (2pt);
		\draw[fill = white] (1, 2) circle (2pt);
		\draw[fill = white] (2, 2) circle (2pt);
		\draw[fill = white] (3, 2) circle (2pt);
		\draw[fill = white] (4, 2) circle (2pt);

		\draw[->, rounded corners, color = blue] (-0.3, 2) -- (-0.3, 0.7) -- (1, 0.7); 
		\draw[->, rounded corners, densely dotted, color = brown] (0, 2.3) -- (3.3, 2.3) -- (3.3, 1); 
		\draw[->, rounded corners, color = blue] (1.3, 0.7) -- (1.3, -0.3) -- (4, -0.3); 
		\draw[->, rounded corners, densely dotted, color = brown] (3.3, 0.7) -- (4.3, 0.7) -- (4.3, 0); 
        \draw[->, rounded corners, dashed, color = magenta] (4, 2.2) -- (2.8, 2.2) -- (2.8, 1); 
        \draw[->, rounded corners, dashed, color = magenta] (0.8, 1) -- (0.8, -0.2) -- (0, -0.2);	

		\draw[->, color = blue] (4.5, 2) -- (5, 2) node [right] {\textcolor{black}{$f_0, \pi_{00}: \pi_{00}^\mathrm{sc}+ \pi_{00}^\mathrm{ca}$}};
		\draw[->, densely dotted, color = brown] (4.5, 1.5) -- (5, 1.5) node [right] {\textcolor{black}{$f_0, \pi_{01}: \pi_{01}^\mathrm{sc} + \pi_{01}^\mathrm{ca}$}};
		\draw[->, dashed, color = magenta] (4.5, 1) -- (5, 1) node [right] {\textcolor{black}{$f_1, \pi_{10}: \pi_{10}^\mathrm{sc} + \pi_{10}^\mathrm{ca}$}};
	\end{tikzpicture}
	\caption{Example of a scheduling problem.}
	\label{fig:llf-rc-system-example}
\end{figure}

The scheduling performed by \gls{LLF-RC} algorithm is illustrated in Fig. \ref{fig:llf-rc-scheduling-example}. At the beginning of slot $ 3 $, $\pi_{00}^\mathrm{sc}$ has completed the transmissions, but $\pi_{01}^\mathrm{sc}$ has not. Note that the released transmission of $\pi_{01}^\mathrm{sc}$ is $\overrightarrow{r_2 g_1}$, which has absolute deadline $ 5 $. The released transmission of $\pi_{10}^\mathrm{sc}$ is also $\overrightarrow{r_2 g_1}$, but its absolute deadline is $ 6 $. The two transmissions have laxities $ 2 $ and $ 3 $, respectively, while the number of remaining conflicting transmissions for both is $ 6 $, equal to the sum of unscheduled transmissions on $ \overrightarrow{r_2 g_1} $, $ \overrightarrow{r_1 r_2} $, $ \overrightarrow{s_1 r_2} $, and $ \overrightarrow{g_1 r_5} $. As a result, $ \pi_{01}^\mathrm{sc} $ is scheduled due to its smaller laxity.
\begin{figure}[!h]
	\begin{tikzpicture}
		\centering
   		\begin{axis}
   		[%
   			hide axis,
		   	height = 1.6cm,
		   	xmin = 0,
		   	xmax = 1,
		   	ymin = 0,
		   	ymax = 1,
			legend style = {row sep = 0.01cm},
		 	legend style = {column sep = 0.25cm},
		   	legend cell align = {left},
		   	legend columns = 2,
		   	legend pos = north east,
		   	legend style = {at = {(3.1, 4)}, anchor = south west, font = \fontsize{7}{6}\selectfont, text depth = .ex, fill = none},
	  	]
	
	 	\addlegendimage{draw=none, fill = green, area legend, postaction = {pattern = north west lines, pattern color = black}} 	\addlegendentry{sc transmissions}

	 	\addlegendimage{draw=none, fill = yellow, area legend, postaction = {pattern = north east lines, pattern color = black}} 	\addlegendentry{ca transmissions}
	 	
	 	\addlegendimage{-stealth } \addlegendentry{release time (upwards)}
	 	
		\addlegendimage{stealth-} \addlegendentry{deadline (downwards)}
	   
	   \end{axis}

	\end{tikzpicture}
	\centering
	\begin{RTGrid}[nosymbols=1,width=7cm]{3}{20}
	\RowLabel{1}{$\pi_{00}$}
	\RowLabel{2}{$\pi_{01}$}
	\RowLabel{3}{$\pi_{10}$}
	\TaskArrDead{1}{0}{10}
	\TaskArrDead{1}{10}{10}
	\TaskArrDead{2}{0}{10}
	\TaskArrDead{2}{10}{10}
	\TaskArrDead{3}{0}{9}
	\TaskExecutionSC[color = green]{3}{0}{1}
	\TaskExecutionCA[color = yellow]{2}{5}{6}
	\TaskExecutionSC[color = green]{2}{0}{4}
	\TaskExecutionSC[color = green]{1}{1}{3}
	\TaskExecutionCA[color = yellow]{1}{4}{8}
	\TaskExecutionCA[color = yellow]{2}{8}{9}
	\TaskExecutionSC[color = green]{3}{4}{5}
	\TaskExecutionCA[color = yellow]{3}{6}{8}
	\TaskExecutionCA[color = yellow]{2}{14}{16}
	\TaskExecutionSC[color = green]{2}{10}{14}
	\TaskExecutionCA[color = yellow]{1}{14}{18}
	\TaskExecutionSC[color = green]{1}{11}{13}
	\end{RTGrid}
	\caption{Scheduling with LLF-RC}
	\label{fig:llf-rc-scheduling-example}
\end{figure}

\subsection{Opportunistic Aggregation} \label{sec:opportunistic-aggregation}

We propose \gls{OA} mechanism, which can be integrated to work seamlessly with any scheduling algorithm. In general, scheduling problems can be infeasible due to tight deadlines or limitations on the total utilization. For instance, when the total utilization is greater than the number of channels $ C $, scheduling problems are infeasible. Even if there are enough available channels, scheduling can still be infeasible due to strict deadlines. \Gls{OA} can alleviate the schedulability issue in either case by combining packets. Specifically, \gls{OA} can be applied to most control applications as control packets are shorter compared to the slot length, which is long enough to accommodate a packet of maximum size, as defined in WirelessHART \cite{HART07} and \gls{TSCH} \cite{tsch12} standards. To leverage \gls{OA}, we replace \texttt{line 6} in \cref{alg:framework} with \cref{alg:opportunistic-aggregation}, which determines the packets to be scheduled and aggregated. The logic of \cref{alg:opportunistic-aggregation} is as follows. When the sender of a released transmission is already scheduled but the receiver is not or the link itself is scheduled, the packets are aggregated (see \texttt{lines 3-6}). When the sender and the receiver are both unscheduled, and there is at least one unused channel, then an unused channel is employed for the transmission (see \texttt{lines 8-12}).


{\fontsize{11}{10}\selectfont
\begin{algorithm}
	\KwIn{ \\ $ C: $ number of channels \\ $ \mathcal{J}: $ set of released transmissions ordered by priority }
	\KwOut{ \\ $ \mathcal{S}: $ set of scheduled senders \\$ \mathcal{R}: $ set of scheduled receivers \\$ \mathcal{L}: $ set of scheduled links}
	\KwInit{$ m = 0, \mathcal{S} = \emptyset, \mathcal{R} = \emptyset, \mathcal{L}  = \emptyset$\;}
	
	\ForEach{$ t \in \mathcal{J} $} {
	\eIf {$ t.\mathrm{sender} \in \mathcal{S} $} {
			\If {($ t.\mathrm{receiver} \notin \mathcal{S} $ and $ t.\mathrm{receiver} \notin  \mathcal{R} $) or $(t.\mathrm{sender} \rightarrow t.\mathrm{receiver}) \in \mathcal{L} $} {
				Schedule $ t $ and perform aggregation\;
				$ \mathcal{R} = \mathcal{R} \cup t.\mathrm{receiver} $\; 
				$ \mathcal{L} = \mathcal{L} \cup (t.\mathrm{sender} \rightarrow t.\mathrm{receiver}) $\;
			} 
		} {
			\If {$ m < C $ and $ t.\mathrm{sender} \notin \mathcal{R} $ and $t.\mathrm{receiver} \notin \mathcal{S} \cup \mathcal{R} $ } {
				Schedule $t$\;
				$ \mathcal{S} = \mathcal{S} \cup t.\mathrm{sender}, \mathcal{R} = \mathcal{R} \cup t.\mathrm{receiver} $\; 
				$ \mathcal{L} = \mathcal{L} \cup (t.\mathrm{sender} \rightarrow t.\mathrm{receiver}) $\;
				$ m = m +1 $\;
			}
		}
	}	
	\caption{Proposed OA mechanism}
	\label{alg:opportunistic-aggregation}
	\algorithmfootnote{Note that $ t.\mathrm{sender} $ denotes the sender of released transmission $ t $, and $ t.\mathrm{receiver} $ denotes the receiver of released transmission $ t $. To use \gls{OA}, we disable the utilization check in the schedulability test. }
\end{algorithm}
}

To illustrate \gls{OA}'s principle, we use the same example as in \cref{fig:llf-rc-system-example} with \gls{LLF-RC} algorithm, and show the resulting scheduling in \cref{fig:llf-rc-oa-scheduling-example}. We observe that \gls{OA} takes effect in slots $ 0 $, $ 3 $, $ 4 $, $ 5 $, and $ 10 $. In slot $ 0 $, the same packet is sent on both links $ \overrightarrow{s_0 r_0} $ and  $ \overrightarrow{s_0 r_1} $, simultaneously. In slot $ 3 $, two packets from different flows are aggregated and sent on link $\overrightarrow{r_2 g_1}$. In slot $ 5 $, two packets from different flows are aggregated on node $ r_6 $ and sent on links $ \overrightarrow{r_6 a_1} $ and $ \overrightarrow{r_6 r_7} $.
\begin{figure}[!h]
	\begin{tikzpicture}
		\centering
   		\begin{axis}
   		[%
   			hide axis,
		   	height = 1.6cm,
		   	xmin = 0,
		   	xmax = 1,
		   	ymin = 0,
		   	ymax = 1,
			legend style = {row sep = 0.01cm},
		 	legend style = {column sep = 0.25cm},
		   	legend cell align = {left},
		   	legend columns = 2,
		   	legend pos = north east,
		   	legend style = {at = {(3.1, 4)}, anchor = south west, font = \fontsize{7}{6}\selectfont, text depth = .ex, fill = none},
	  	]
	
	 	\addlegendimage{draw=none, fill = green, area legend, postaction = {pattern = north west lines, pattern color = black}} 	\addlegendentry{sc transmissions}

	 	\addlegendimage{draw=none, fill = yellow, area legend, postaction = {pattern = north east lines, pattern color = black}} 	\addlegendentry{ca transmissions}
	 	
	 	\addlegendimage{-stealth } \addlegendentry{release time (upwards)}
	 	
		\addlegendimage{stealth-} \addlegendentry{deadline (downwards)}
	   
	   \end{axis}

	\end{tikzpicture}
	\centering
	\begin{RTGrid}[nosymbols=1,width=7cm]{3}{20}
	\RowLabel{1}{$\pi_{00}$}
	\RowLabel{2}{$\pi_{01}$}
	\RowLabel{3}{$\pi_{10}$}
	\TaskArrDead{1}{0}{10}
	\TaskArrDead{1}{10}{10}
	\TaskArrDead{2}{0}{10}
	\TaskArrDead{2}{10}{10}
	\TaskArrDead{3}{0}{9}
	\TaskExecutionSC[color = green]{3}{0}{1}
	\TaskExecutionSC[color = green]{2}{0}{4}
	\TaskExecutionSC[color = green]{1}{0}{2}
	\TaskExecutionCA[color = yellow]{1}{4}{8}
	\TaskExecutionCA[color = yellow]{2}{4}{6}
	\TaskExecutionSC[color = green]{3}{3}{4}
	\TaskExecutionSC[color = green]{2}{10}{14}
	\TaskExecutionCA[color = yellow]{1}{14}{18}
	\TaskExecutionCA[color = yellow]{3}{4}{6}
	\TaskExecutionSC[color = green]{1}{10}{12}
	\TaskExecutionCA[color = yellow]{2}{14}{16}
	\end{RTGrid}
	\caption{Scheduling with LLF-RC and OA}
	\label{fig:llf-rc-oa-scheduling-example}
\end{figure}

\subsection{Repetitive Scheduling} \label{sec:repetitive-scheduling}
Centralized \gls{TDMA} usually requires a network manager to (i) compute the scheduling, and (ii) distribute it to each mote. Since each node needs to store at least the partial scheduling related to itself, the communication cost (to disseminate the scheduling) and the storage cost (to retain the scheduling) cannot be ignored. Assuming that the communication and storage costs for every entry of the scheduling table is constant, the total cost of either is $ \mathcal{O} (H \cdot U) $, where $ H $ is the hyperperiod, $ U $ is the total utilization, and $ H \cdot U $ is the total number of entries in the scheduling table. In addition, $ H $ can be as large as the product of all periods, i.e., $ H = \prod_i {p_i} $, which occurs when the periods are co-prime. However, if the periods are \emph{harmonic}, meaning a period is divisible by any period smaller than itself, then $ H $ is equal to the maximum period, i.e., $H = \max_i \left\lbrace p_i \right\rbrace $, thus leading to a smaller cost.

If a system has harmonic periods, communication and storage costs can be reduced by enforcing that scheduling of each flow is repeated for each flow period. Therefore, we call this strategy \gls{RS}. In contrast, \gls{HS} is the procedure by which all time slots of a hyperperiod are scheduled. Both \gls{RS} and \gls{HS} behave similarly in terms of execution time. However, \gls{HS} has a much larger communication and storage cost $\mathcal O(H \cdot U) = \mathcal O (\sum_i \frac{H}{p_i} \cdot L_i)$, whereas \gls{RS} has a more affordable cost $\mathcal O(\sum_i L_i)$, where $ L_i $ is the total number of hops in a flow, and is independent of $H$. The trade-off between these two types of scheduling is that \gls{RS} is less costly but also less flexible, which can affect scheduling performance, while \gls{HS} is more costly but more flexible, and generally offers higher schedulability.

In \cref{alg:repetitive-scheduling}, we describe the proposed \gls{RS}. In particular, \gls{RS} uses as input the scheduling result of a \gls{HS} algorithm, for instance, \gls{LLF-RC} presented in \cref{sec:llf-rc}. Then, it finds the periods of all flows (see \texttt{line 2}). Subsequently, it identifies the distinct values for the periods, and sort them in increasing order (see \texttt{line 3}). Afterwards, it schedules the flows of a certain period (see \texttt{line 5}). If the scheduling of a flow meets the deadline for one period, it will always do so for the entire hyperperiod since the periods are harmonic. Note that \gls{RS} can be combined with \gls{OA}. 

{\fontsize{11}{10}\selectfont
\begin{algorithm}
	\KwIn{ \\ $ \mathcal{A} $: A \gls{HS} algorithm }
	\KwOut{\\ $ \mathcal{H} $: Scheduling table, infeasible if $\mathcal{H} = \emptyset$}
	$\mathcal{H} = \emptyset$\;
	Find the periods of all $ F $ flows, listed as $ p_0, p_1 \dots p_{F-1}$\; 
	Find distinct values from $p_i $, and sort them, such that $p'_0 < p'_1 <...<p'_{M-1}, M \le F$\;
	\For {$j = 0$ \KwTo $M-1$} {
		Schedule all flows having the same period $p'_j$ for one period using $ \mathcal{A} $, considering the current scheduling table $ \mathcal{H}$; let the resulting scheduling table be $\mathcal{H}_j$\;
		\If {$\mathcal{H}_j = \emptyset$} {
			\Return $ \mathcal{H} = \emptyset $;	
		}
		$ \mathcal{H} = \mathcal{H} \cup \{ (p'_j, \mathcal{H}_j) \}$;	
	}
	\Return $ \mathcal{H}$;	
\caption{Proposed RS mechanism}
\label{alg:repetitive-scheduling}
\end{algorithm}
}

Considering the example in \cref{fig:llf-rc-system-example}, the scheduling that results from using \gls{LLF-RC} with \gls{RS} is shown in \cref{fig:llf-rc-rs-examples}. Contrary to what occurs in \cref{fig:llf-rc-scheduling-example}, where \gls{LLF-RC} is employed without \gls{RS}, the scheduling of $ f_0 $ is repeated throughout the periods.
\begin{figure}[!h]
	\begin{tikzpicture}
		\centering
   		\begin{axis}
   		[%
   			hide axis,
		   	height = 1.6cm,
		   	xmin = 0,
		   	xmax = 1,
		   	ymin = 0,
		   	ymax = 1,
			legend style = {row sep = 0.01cm},
		 	legend style = {column sep = 0.25cm},
		   	legend cell align = {left},
		   	legend columns = 2,
		   	legend pos = north east,
		   	legend style = {at = {(3.1, 4)}, anchor = south west, font = \fontsize{7}{6}\selectfont, text depth = .ex, fill = none},
	  	]
	
	 	\addlegendimage{draw=none, fill = green, area legend, postaction = {pattern = north west lines, pattern color = black}} 	\addlegendentry{sc transmissions}

	 	\addlegendimage{draw=none, fill = yellow, area legend, postaction = {pattern = north east lines, pattern color = black}} 	\addlegendentry{ca transmissions}
	 	
	 	\addlegendimage{-stealth } \addlegendentry{release time (upwards)}
	 	
		\addlegendimage{stealth-} \addlegendentry{deadline (downwards)}
	   
	   \end{axis}

	\end{tikzpicture}
	\centering
	\begin{RTGrid}[nosymbols=1,width=7cm]{3}{20}
	\RowLabel{1}{$\pi_{00}$}
	\RowLabel{2}{$\pi_{01}$}
	\RowLabel{3}{$\pi_{10}$}
	\TaskArrDead{1}{0}{10}
	\TaskArrDead{1}{10}{10}
	\TaskArrDead{2}{0}{10}
	\TaskArrDead{2}{10}{10}
	\TaskArrDead{3}{0}{9}
	\TaskExecutionSC[color = green]{3}{0}{1}
	\TaskExecutionSC[color = green]{2}{0}{4}
	\TaskExecutionSC[color = green]{1}{1}{3}
	\TaskExecutionCA[color = yellow]{1}{4}{8}
	\TaskExecutionCA[color = yellow]{2}{4}{6}
	\TaskExecutionSC[color = green]{3}{6}{7}
	\TaskExecutionSC[color = green]{2}{10}{14}
	\TaskExecutionCA[color = yellow]{1}{14}{18}
	\TaskExecutionCA[color = yellow]{3}{7}{9}
	\TaskExecutionSC[color = green]{1}{11}{13}
	\TaskExecutionCA[color = yellow]{2}{14}{16}
	\end{RTGrid}
	\caption{Scheduling with LLF-RC and RS}
	\label{fig:llf-rc-rs-examples}
\end{figure}

\subsection{Implications of Applying the Scheduling Framework to Industrial Networks}
The following practical implications are essential for applying the proposed scheduling framework to industrial networks.

\textbf{Routing.} Routing is the prerequisite for applying the scheduling framework. The central unit of a network computes a multi-path routing scheme periodically (c.f. Section~\ref{sec:routing-model} and \ref{sec:generation-scenarios}). The routing algorithm requires as input the link quality measurement of the network. A mechanism that measures link quality locally, and convergecasts the measurements to the central unit is needed~\cite{yuan2012:tree-based-multi-channel-convergecast-wireless-sensor-networks}. The details are out of the scope of this paper.

\textbf{Scheduling.} After routing is done, the central unit is ready to run scheduling algorithms, e.g. \gls{LLF-RC} + \gls{RS} + \gls{OA}. Then the computed scheduling table needs to be distributed to each node in the network, which stores a relevant partial scheduling table. We show in Section~\ref{sec:perf-llc-rc-hs-rs} that the distribution of scheduling table takes less than one second for a network of 100 nodes and 50 flows with the Ripple protocol~\cite{yuan2015:ripple-high-throughput-reliable-energy-efficient-network-flooding-wireless-sensor-networks}.

\textbf{Periodic Control.} Now control loops are formed, and should be activated periodically according to the scheduling table. To further increase control quality, the control algorithm may compute the actuating packet even if the related sensing packet is not delivered to any of the gateways in time, by applying an control algorithm that can estimate the sensing data. 
Some intelligence may even be applied at an actuator which can tolerate sporadic sensing packet and actuating packet loss. The details are out of the scope of this paper.
\color{black}

%% file: sections/proposed-method.tex

\section{Adaptation of Existing Scheduling Algorithms} \label{sec:extended-framework}

In this section we adapt several existing scheduling algorithms to be compatible with our framework. 

\subsection{Existing Scheduling Algorithms}

In \cite{liu2000:real-time-systems}, various multiprocessor scheduling algorithms were proposed, namely, \gls{RM}, \gls{DM}, \gls{PDM}, \gls{CLLF}, \gls{EDF}, and \gls{EPD}. In addition, in \cite{lee1994:on-line-multiprocessor-scheduling-algorithms-real-time-tasks}, the \gls{EDZL} algorithm was proposed. All these multiprocessor scheduling algorithms can be classified into two categories, \emph{fixed-priority} and \emph{dynamic-priority}. We adapt these algorithms to our case by introducing the following changes.

\begin{itemize} [leftmargin = 0.3cm]
	\item We use the \emph{2P scheduling} instead of \emph{1P scheduling}, which is used by \cite{saifullah2010:real-time-scheduling-wirelesshart-networks, lee1994:on-line-multiprocessor-scheduling-algorithms-real-time-tasks}, etc.
	
	\item The transmissions of a sa-path is mapped to a \emph{task}. However, in our work, these transmissions are mapped to two \emph{tasks}, which correspond to the transmissions of sc- and ca-paths. 
\end{itemize}

In the following, we briefly revisit the aforementioned algorithms, adapting the terminology to our context.

\emph{\textsc{Remark:} Following the terminology of real-time scheduling, we use the term `task' to denote a repeated work to be scheduled, and the term `job' to denote an instance of a task.}

\subsection{Fixed-Priority Scheduling Algorithms} \label{sec:fixed-priority}

Fixed-priority scheduling algorithms assign priorities to tasks in advance. Hence, runtime information is not required for comparing flow priorities. In general, the performance of fixed-priority algorithms is inferior to that of dynamic-priority algorithms \cite{davis2011:survey-hard-real-time-scheduling-multiprocessor-systems}. 

\textbf{Rate monotonic (RM).} To be able to use \gls{RM} for flow scheduling, we map the transmissions of a flow to a task. Specifically, \gls{RM} assigns each flow a priority inversely proportional to its period \cite{liu2000:real-time-systems}.

\textbf{Deadline monotonic (DM).}  Analogous to \gls{RM}, we map the transmissions of a flow to a task. However, \gls{DM} assigns each flow a priority inversely proportional to its \emph{relative deadline} \cite{liu2000:real-time-systems}. The \emph{relative deadline} of a flow is denoted by $ d^\mathrm{flow}_\mathrm{rel} \left( f \right) $, and is determined by the control system design.

\textbf{Proportional deadline monotonic (PDM).} Compared to \gls{RM} and \gls{DM}, the tasks in \gls{PDM} have finer granularity. In \gls{PDM}, we map transmissions of a sc- or ca-path to a task. \gls{PDM} assigns each sc- or ca-path a priority inversely proportional to its \emph{proportional deadline} $ d_\mathrm{prop}^\mathrm{path}(\pi_i^{x}) $, which is calculated as,
\begin{equation}
	d_\mathrm{prop}^\mathrm{path}(\pi_i^{x}) = \cfrac{d^\mathrm{subflow} \left( \pi_i^{x} \right)}{l(\pi_i^{x})},
\end{equation}
where $ d^\mathrm{subflow} \left( \pi_i^{x} \right) $ is the \emph{subflow deadline}, and $ l(\pi_i^{x}) $ is the \emph{path length}, i.e., the number of hops of $ \pi_i^{x} $.
The subflow deadline $ d^\mathrm{subflow} \left( \pi_i^{x} \right) $ of an sc- or ca-path is defined as the relative flow deadline minus the path length of the longest ca- or sc-path, as shown below,
\begin{equation}
	d^\mathrm{subflow} \left( \pi_i^{x} \right) = d^\mathrm{flow}_\mathrm{rel} \left( f \right) - \max_j \{l(\pi_j^{y}) \},  
\end{equation}
where $ (x,y) \in \left\lbrace (\mathrm{sc}, \mathrm{ca}), (\mathrm{ca}, \mathrm{sc}) \right\rbrace $.

\subsection{Dynamic-Priority Scheduling Algorithms} \label{sec:dynamic-priority}

Compared to fixed-priority algorithms, which employ a predefined priority for the whole scheduling process, dynamic-priority algorithms determine the flows priority at runtime.

\textbf{Earliest deadline first (EDF).} It prioritizes the jobs with earlier absolute deadlines \cite{liu2000:real-time-systems}. We map the transmissions of a sc- or ca-path in one period to a job. Assuming that a flow $ f $ has sc-paths $ \pi_i^\mathrm{sc} $, ca-paths $ \pi_i^\mathrm{ca} $, and a \emph{relative flow deadline} of $ d_\mathrm{rel}^\mathrm{flow} \left( f \right) $, the relative deadlines of a sc- and ca-path are $ d_\mathrm{rel}^\mathrm{path} \left( \pi_i^\mathrm{sc} \right) $ and $ d_\mathrm{rel}^\mathrm{path} \left( \pi_i^\mathrm{ca} \right) $, respectively, which were defined in (\ref{eqn:relative-path-deadline-sc}) and (\ref{eqn:relative-path-deadline-ca}). Thus, the absolute deadline of the $k$-th activation of an sc- or ca-path (namely, the absolute deadline of the transmissions of the $k$-th packet of flow $f$ on the path) is calculated as
\begin{align}
	d_\mathrm{abs}^\mathrm{path}(\pi_i^{x}, k) = k \cdot p_f + d_\mathrm{rel}^\mathrm{path}(\pi_i^{x}) - 1
	\label{eqn:absolute-deadline}, x \in \{\mathrm{sc}, \mathrm{ca}\}, 
\end{align}
where $ p_f $ is the flow period. Here, $ d_\mathrm{abs}^\mathrm{path}(\pi_i^{x}, k) $ is used to dynamically assign priorities to jobs.

\textbf{Earliest proportional deadline (EPD).} It prioritizes jobs with small sub-deadlines. A sub-deadline is defined as the time left until the deadline of a job, divided by the processing cost needed to finish the job. In our case, the transmissions in a sc- or ca-path are mapped to a job. Therefore, we define a sub-deadline as the remaining slots left until the path deadline, divided by the number of remaining transmissions on the path, which is given by
\begin{align}
	s^\mathrm{path}(\pi_i^{x}, k) & = \frac{d_\mathrm{abs}^\mathrm{path}(\pi_i^{x}, k) - t +1}{n_\mathrm{rem}(\pi_i^{x})} \nonumber \\
								  & = \frac{d_\mathrm{abs}^\mathrm{path}(\pi_i^{x}, k) - t +1}{n_\mathrm{ahead}(\pi_i^{x}) + 1}, x \in \{\mathrm{sc}, \mathrm{ca}\}, 
\end{align}
where $k$ is the current period index, $ t $ is the current slot index, and $ n_\mathrm{rem}(\pi_i^{x}) $ is the number of remaining transmissions. Recall that $ n_\mathrm{ahead}(\pi_i^{x}) $ was defined in \cref{sec:llf-rc}.

\textbf{Earliest deadline zero laxity (EDZL).} It yields the same scheduling as \gls{EDF} until the instant when a job has a laxity of zero \cite{lee1994:on-line-multiprocessor-scheduling-algorithms-real-time-tasks}. At this instant, \gls{EDZL} gives the job the highest priority. We implement \gls{EDZL} as a combination of \gls{EDF} and \gls{LLF}. If two jobs have laxity greater than $ 0 $ and unequal deadlines, \gls{EDZL} works as \gls{EDF}, otherwise, it works as \gls{LLF}.

\textbf{Conflict-aware least-laxity-first (CLLF).} It is an extension of the \gls{LLF} algorithm adapted to WirelessHART by taking into account conflicts among remaining transmissions. Since it needs to search for unreleased transmissions, it is computationally expensive. The authors report that its performance is significantly higher than other scheduling algorithms evaluated in \cite{saifullah2010:real-time-scheduling-wirelesshart-networks}. To be compatible with \emph{2P scheduling}, we adapted \gls{CLLF} by properly calculating the deadline of a transmission according to (\ref{eqn:absolute-deadline}).

%% file: sections/simulation-results.tex
\section{Simulations} \label{sec:simulations}
We evaluate the performance of several scheduling algorithms using randomly generated networks based on the model described in  \cref{sec:link-quality-model}. We also evaluate the impact of \gls{OA} and \gls{RS}~\footnote{The source code and dataset for
the simulations in this paper are available at: https://github.com/pdvyuan/WSAN\_periodic\_control\_scheduling}.

\subsection{Performance Metrics}

To evaluate the algorithms' performance, we use the following metrics:

\textbf{Schedulability ratio.} It is related to schedulability performance. It denotes the percentage of instances in which all deadlines are met. A higher value is desirable.

\textbf{Execution time.} It is the time an algorithm needs to return a scheduling solution. A smaller value is desirable.

\textbf{Maximum queue length.} It is related storage costs. It is the maximum number of packets stored by a node during network operation. A smaller value is desirable.

\textbf{Scheduling table size.} It is related to communication and storage costs. It is the size of the table that contains a schedule. A smaller value is desirable.

\subsection{Generation of Scenarios} \label{sec:generation-scenarios}

In order to evaluate a large variety of scenarios, we examine different settings of topology, number of flows, total utilization, period length, deadline type, and number of channels, as explained next. We summarize the most relevant parameters in  \cref{tab:simulation-parameters}.
\begin{table}[!h]
	\renewcommand{\arraystretch}{1.15}
	\setlength\tabcolsep{6pt} 
	\caption{Simulation parameters}
	\centering
	\fontsize{7pt}{9pt}\selectfont
	\label{tab:simulation-parameters}
	\begin{tabular}{|p{2.8cm}| p{1.3cm} | p{3.5cm} |}
		\hline
		\textbf{ Description } & \textbf{ Parameter }  &	\textbf{ Value and Units} 		\\
		\hline
		\hline
		Side of the square & $ s $ & 1200 m	\\ \hline
		Number of motes & $ N $ & 100	\\ \hline
		Transmit power & $ P_\mathrm{tx} $ & 0 dBm \\ \hline
		Packet length & $ L $ & 133 bytes \\ \hline
		Number of flows & $ F $ & $ \left\lbrace 1, 2, \dots, F_\mathrm{max} \right\rbrace $ \\ \hline
		Maximum number of flows & $ F_\mathrm{max} $ & $ 50 $ \\ \hline
		Expected total utilization & $ U^\mathrm{exp} $ & $ [0, U_\mathrm{max}] $ \\ \hline
		Maximum total utilization & $ U_\mathrm{max} $ & $ 25 $ (with \gls{OA}) or  $ 16 $ (without \gls{OA}) \\ \hline
		Number of channels & $ C $ & $ \{1, \dots, 16 \}$ \\ \hline
		Number of topologies & $ N_\mathrm{top} $ & $ 100 $ \\ \hline
		Parameter (TOSSIM model) & $ \beta_1 $ & $ 0.9794 $ \\ \hline
		Parameter (TOSSIM model) & $ \beta_2 $ & $ 2.3851 $ \\ \hline
		PPR threshold & $ \Gamma_\mathrm{th} $ & $ 0.5 $ \\ \hline
	\end{tabular}
\end{table}

\begin{figure}[!t] 
	\centering
	\begin{subfigure}{0.9\columnwidth}
	   	\centering
		\includegraphics[width=\textwidth]{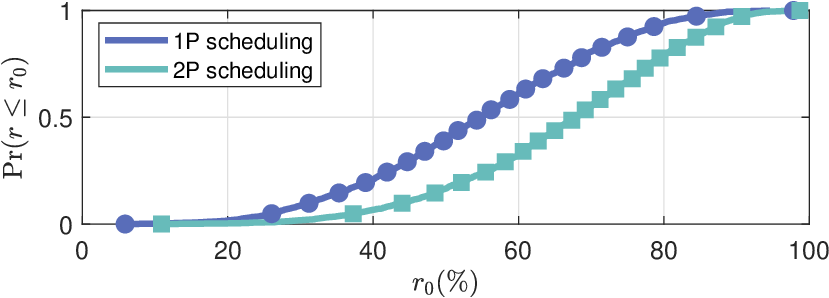}
		\caption{Link quality $ [20, 100] \% $}
	   	\label{fig:comparison-reliability-1p-2p-scheduling-02}
	   	\vspace{2mm}
	\end{subfigure}
	\begin{subfigure}{0.9\columnwidth}
		\centering
		\includegraphics[width=\textwidth]{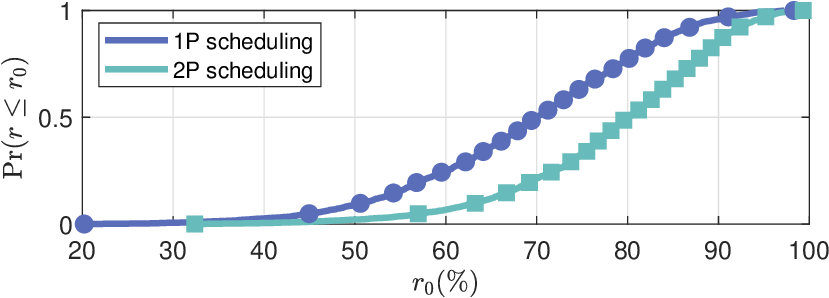}
		\caption{Link quality $ [40, 100] \% $}
	   	\label{fig:comparison-reliability-1p-2p-scheduling-04}
	   	\vspace{2mm}
	\end{subfigure}
	\begin{subfigure}{0.9\columnwidth}
	    \centering
		\includegraphics[width=\textwidth]{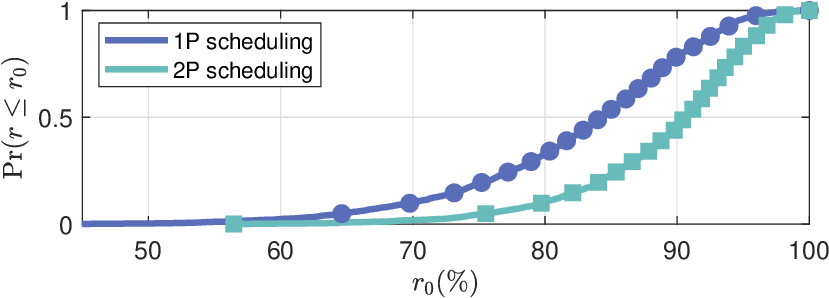}
		\caption{Link quality $ [60, 100] \% $}
	    \label{fig:comparison-reliability-1p-2p-scheduling-06}
	\end{subfigure}
	\begin{subfigure}{0.9\columnwidth}
	    \centering
		\includegraphics[width=\textwidth]{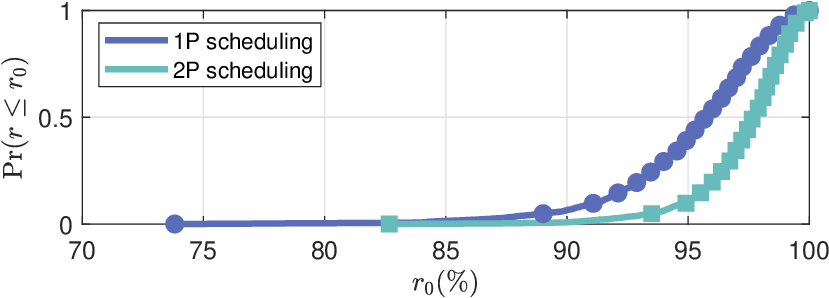}
		\caption{Link quality $ [80, 100] \% $}
	    \label{fig:comparison-reliability-1p-2p-scheduling-08}
	\end{subfigure}
	\caption{Empirical CDFs of communication reliability of \emph{1P scheduling} and \emph{2P scheduling}.}
	\label{fig:comparison-reliability-1p-2p-scheduling}
\end{figure}

\textbf{Topology.} We randomly generate each topology\footnote{The generated networks are dense with an average node degree of approximately $ 10 $.} with $ N $ motes uniformly distributed within a square space of size $s \times s$, which corresponds to medium-sized networked control systems. In addition, we place two gateways at the centers of the left and right half-planes of the square space. The link qualities are calculated with the model presented in \cref{sec:link-quality-model}. In total, $ N_\mathrm{top} $ topologies are evaluated.

\textbf{Number of flows.} We generate a random number of flows $ F \in \left\lbrace 1 \dots F_\mathrm{max} \right\rbrace  $, and for each topology, we evaluate $ 5 $ different flow settings. We enforce each flow to have at least two vertex-disjoint paths. 
	
\textbf{Total utilization.} We randomly choose an expected total utilization $ U^\mathrm{exp} \in [0, U_\mathrm{max}] $ for the whole network, where $ U_\mathrm{max} $ denotes the maximum total utilization value. We use the \emph{UUniFast} algorithm \cite{bini2005:measuring-performance-schedulability-tests,davis2011:improved-priority-assignment-global-fixed-priority-pre-emptive-scheduling-multiprocessor-real-time-systems}  to uniformly distribute $ U^\mathrm{exp} $ among all flows, and for each configuration of flows, we evaluate $ 10 $ different values of $ U^\mathrm{exp} $. Let flow $f$ have an expected utilization $ u_f^\mathrm{exp} $, and contain $\mathrm{hops}(f)$ links, then the period of $f$ is set to the minimum integer $p_f \ge \lceil \mathrm{hops}(f) / u_f^\mathrm{exp} \rceil$, which satisfies the constraint of period length. Thus, the actual utilization of the flow is $ u_f = \mathrm{hops}(f) / p_f $. 

\textbf{Period length.} To compare \gls{RS} and \gls{HS}, we use harmonic periods $ p_f = 2^k, k \in \{1 \dots 13\}$.
Otherwise, $p_f$ is a factor of $ 10000 $, meaning $ 10000 $ is divisible by the positive integer $p_f$. 

\textbf{Deadline type.} We choose between implicit and restricted deadline. The former implies that the deadline and the period are equal, and the latter one implies that the period is larger than the deadline. 

\textbf{Number of channels.} In IEEE 802.15.4, the maximum number of orthogonal channels is $ 16 $. Therefore, we consider $ C = \left\lbrace 1, 2, 4, 8, 16 \right\rbrace $. 

\textbf{Path-loss model.} We choose $ d_0 = 15 $ m, $ \mathrm{PL}(d_0) = 71.84 $ dBm, $ \eta = 2.16 $, and $ \sigma = 8.13 $, according to \cite{tanghe2008:industrial-indoor-channel-large-scale-temporal-fading-900-2400-5200-mhz}. These are widely adopted values for 2.4 GHz frequency in indoor factory environments, for both \gls{LOS} and \gls{NLOS}. A typical IEEE 802.15.4 radio chip CC2420 \cite{CC2420} has a transmit power $ P_\mathrm{tx} $ in the range $ \left[ -25, 0 \right] $ dBm. We choose $ P_\mathrm{tx} = 0 $ dBm since higher transmit power yields higher packet reliability when spatial reuse is disabled. Also, the noise floor of the CC2420 radio chip is $ -98 $ dBm.

\textbf{Communication model.} We choose maximum packet length $ L = 133 $ bytes to account for the worst case. Specifically, an IEEE 802.15.4 frame contains a preamble of $ 4 $ bytes, $ 1 $ SFD byte, $ 1 $ byte of frame length, and PSU of variable length up to $ 127 $ bytes. 

\textbf{Routing model.} We use \emph{two vertex-disjoint reliable paths} for every flow. First, the sc-routing is performed by finding the most reliable path. Then, we remove all nodes on the path except the sensor, and again find the most reliable path from the sensor to the other gateway. These two routes are disjoint except at the sensor node. Next, we restore the original topology, and find two disjoint ca-routes in the same way. The routing method does not cause any single point of failure because the failure of a node (except for a sensor or actuator) will not break the connectivity of a flow.

\textbf{Software.} The simulation is implemented in Java, and runs on a computer with an Intel Core i$7@2.6$ GHz CPU and $ 16 $ GB RAM.

\emph{\textsc{Remark:} In the following sections, we evaluate the performance of the proposed \gls{LLF-RC} algorithm and the adapted \gls{RM}, \gls{DM}, \gls{PDM}, \gls{CLLF}, \gls{EDF}, \gls{EPD}, \gls{EDZL} algorithms. 
{To evaluate the performance of the above multi-processor scheduling algorithms, we include the \gls{RANDOM}, which randomly select a maximal set of released transmissions for the schedule of each slot.}
In addition, we include algorithms ALICE \cite{kim2019:alice-autonomous-link-based-cell-scheduling-tsch} and \gls{TASA} \cite{palattella2012:traffic-aware-scheduling-algorithm-reliable-low-power-multihop-ieee-802-15-4e-networks} as baselines, which have also been adapted to our framework\footnote{As shown in \cite{kim2019:alice-autonomous-link-based-cell-scheduling-tsch}, a shorter slot-frame length favors higher performance of ALICE, as it leads to a decreased delay. Therefore, for optimal performance, we choose a slot-frame length of $ 3 $. To implement \gls{TASA}, we find in each slot a maximal matching of no more than $ C $ links from the released transmissions favoring links with longer sender queue. }. Further, we compare the performance of \gls{HS} and \gls{RS}, with and without \gls{OA}.}

\begin{table*} [!t]
\centering
\begin{tabular} { c | c | c | c | c | c | c | c | c | c | c | c} 
	\hline
	\hline
	Algorithm & \bf LLF-RC (Proposed) & \bf RM & \bf DM & \bf PDM & \bf CLLF & \bf EDF & \bf EPD & \bf EDZL & \bf ALICE & \bf TASA & \bf RANDOM \\ 
	\hline
	Implicit deadline & $15.4$  & $7.5$  & $7.5$  & $8.0$  & $396.4$  & $7.3$  &  $7.8$  & $7.7$  & $7.7$  & $7.6$  & $6.6$ \\ \hline
	Restricted deadline & $5.4$ & $3.0$  & $2.9$  & $3.1$  & $43.8$  & $2.9$  & $3.0$  & $3.0$  & $3.1$  & $2.9$  & $2.7$ \\ \hline
\end{tabular}
\caption{Execution time (in ms) of the algorithms using HS.}
\label{tab:execution-time}
\end{table*}

\begin{figure*}[!t] 
\begin{subfigure}{0.48\textwidth}
   	\centering
	\includegraphics[width=\textwidth]{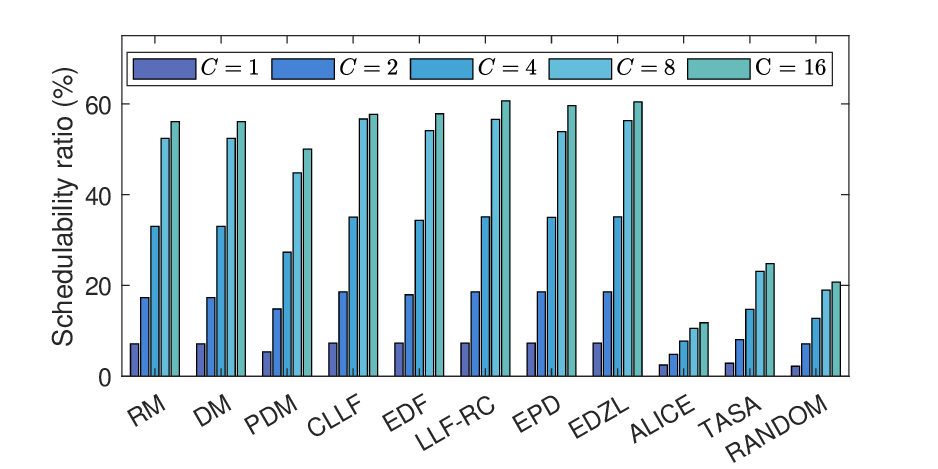}
	\caption{Implicit deadline}
   	\label{fig:schedulability-ratio-algorithms-a}
\end{subfigure}
\hfill
\begin{subfigure}{0.48\textwidth}
	\centering
	\includegraphics[width=\textwidth]{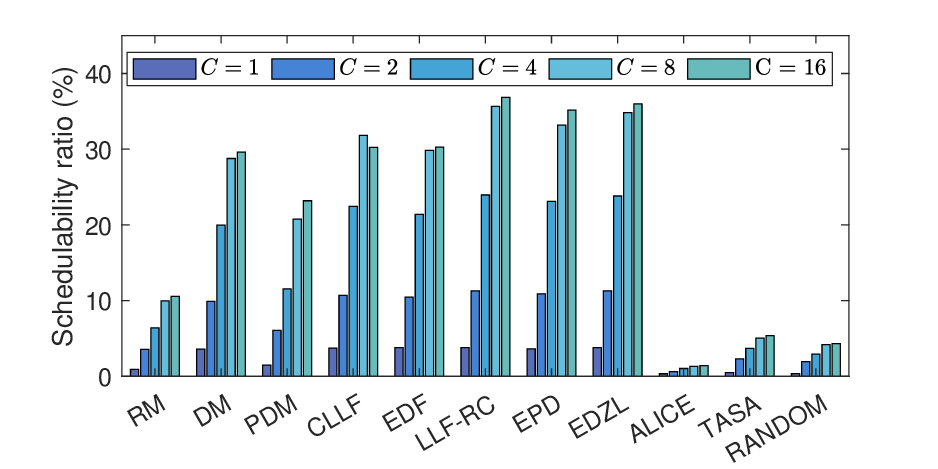}
	\caption{Restricted deadline}
   	\label{fig:schedulability-ratio-algorithms-b}
\end{subfigure}
\caption{Schedulability ratio performance of various algorithms for \emph{2P scheduling}.}
\label{fig:schedulability-ratio-algorithms}
\end{figure*}

\subsection{Reliability of 1P Scheduling and 2P Scheduling} 
To evaluate how much \emph{2P scheduling} increases communication reliability over \emph{1P scheduling}, we generate several topologies and flows as described in \cref{sec:generation-scenarios}. Specifically, we generate the link qualities following uniform distributions. In \cref{fig:comparison-reliability-1p-2p-scheduling}, we evaluate the communication reliability of \emph{1P scheduling} and \emph{2P scheduling} for four cases where the link quality varies in different ranges.

\cref{fig:comparison-reliability-1p-2p-scheduling-02} to \cref{fig:comparison-reliability-1p-2p-scheduling-08} show the empirical \gls{CDF} of the communication reliability of \emph{1P scheduling} and \emph{2P scheduling}. In this scenario, the links qualities are randomly selected from the sets $ [0.2, 1] $, $ [0.4, 1] $, $ [0.6, 1] $, and $ [0.8, 1] $, respectively. We observe that the communication reliability of \emph{2P scheduling} is superior to that of \emph{1P scheduling}. In particular, it exhibits an approximate increment with respect to the median value, between $ 13\% $ (in \cref{fig:comparison-reliability-1p-2p-scheduling-02}) and $ 2\% $ (in \cref{fig:comparison-reliability-1p-2p-scheduling-08}). The performance gap is smaller in \cref{fig:comparison-reliability-1p-2p-scheduling-08} since all link qualities are very high.

\emph{\textsc{Remark:} Via simulations we have shown that the proposed 2P scheduling has higher reliability than the 1P scheduling that is commonly used in the literature, confirming our theoretical analysis in Theorem \ref{thm:reliability}. In the remainder of scenarios, we  focus on 2P scheduling.}


\subsection{Schedulability Ratio of Several Algorithms using HS}

In this section, our goal is to find the best scheduling algorithm without considering optional features, such as \gls{RS} or \gls{OA}. In particular, we investigate the performance of several algorithms in terms of schedulability ratio and execution time.

In \cref{fig:schedulability-ratio-algorithms}, we show the schedulability ratio for various configurations of number of channels and deadline type. 
{We observe that the schedulability ratio generally increases with the number of channels available for any scheduling algorithm, because with more channels, it allows more links to be scheduled simultaneously, and thus it is more likely for the scheduling problem to meet the deadlines.}
The proposed \gls{LLF-RC} algorithm has the highest schedulability ratio in almost all cases, either with an implicit or restricted deadline. Besides, \gls{EDZL} also shows a high schedulability ratio, being only outperformed by \gls{LLF-RC}. In \cref{tab:execution-time}, we show the execution time of all benchmarked algorithms, where we observe that \gls{LLF-RC} needs approximately twice the execution time compared to the fastest algorithm \gls{RANDOM} with either an implicit or constrained deadline. Yet, the execution time of \gls{LLF-RC} with either deadline type is still small enough to allow frequent online rescheduling, which can further improve the schedulability performance.

 \begin{figure*}[!t] 
 	\centering
 	\begin{subfigure}{.48\textwidth}
 	    \centering
 		\includegraphics[width=\textwidth]{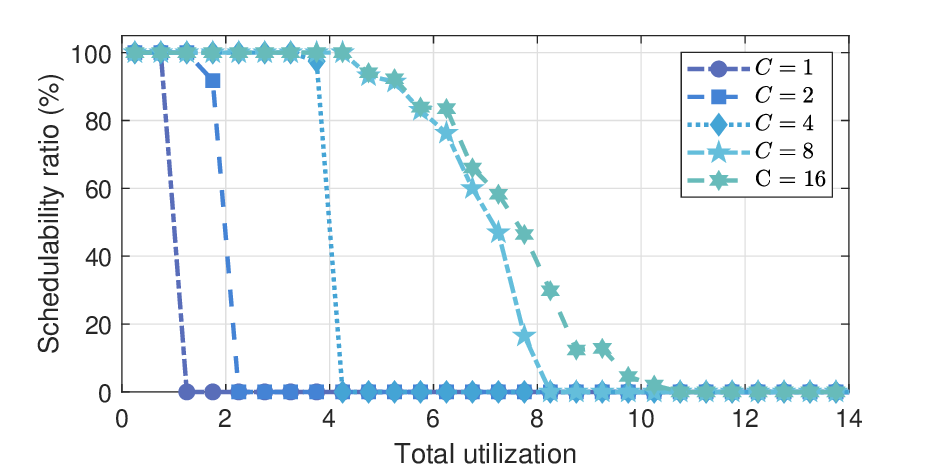}
 		\caption{HS without OA (Implicit deadline)}
 	    \label{fig:schedulability-ratio-vs-channel-utilization-a}
 	    \vspace{2mm}
 	\end{subfigure}
 	\hfill
 	\begin{subfigure}{.48\textwidth}
 	    \centering
 		\includegraphics[width=\textwidth]{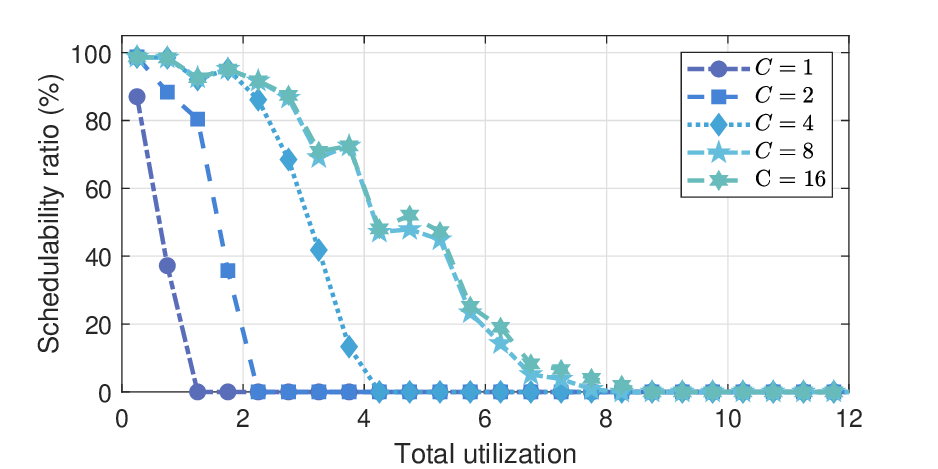}
 		\caption{HS without OA (Restricted deadline)}
 	    \label{fig:schedulability-ratio-vs-channel-utilization-b}
 	    \vspace{2mm}
 	\end{subfigure}
 	\begin{subfigure}{.48\textwidth}
 	   	\centering
 		\includegraphics[width=\textwidth]{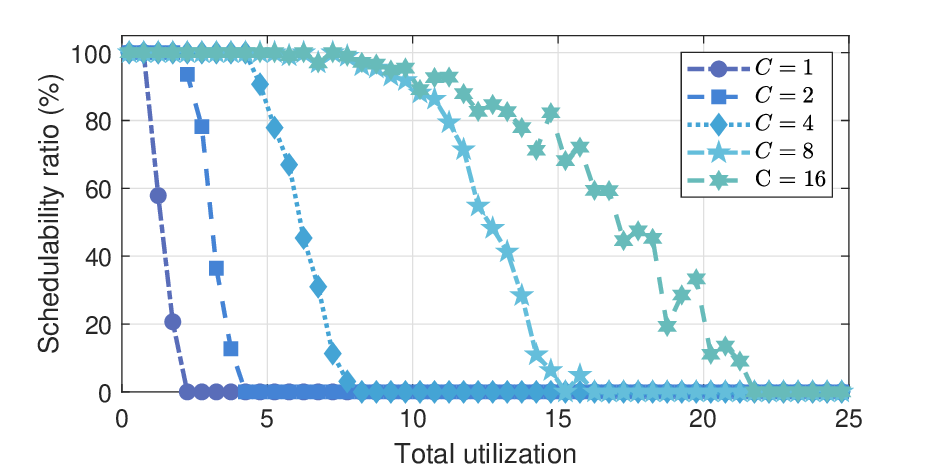}
 		\caption{HS with OA (Implicit deadline)}
 	   	\label{fig:schedulability-ratio-vs-channel-utilization-c}
 	\end{subfigure}
 	\hfill
 	\begin{subfigure}{.48\textwidth}
 		\centering
 		\includegraphics[width=\textwidth]{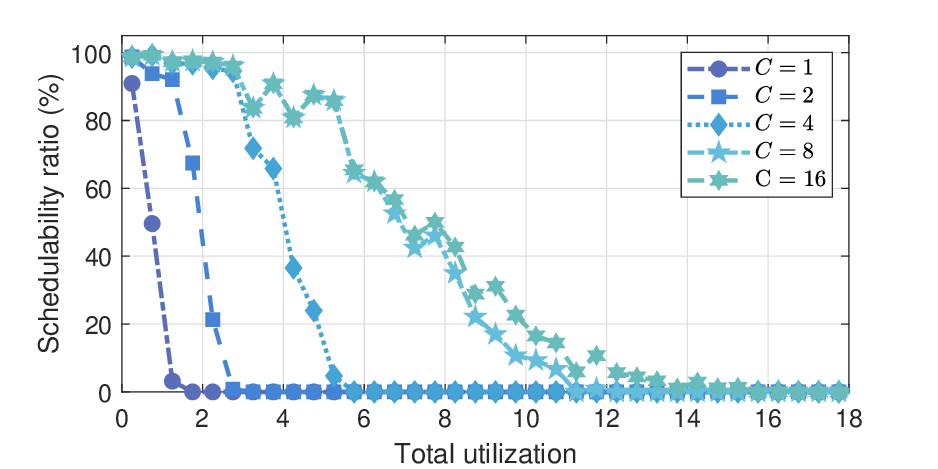}
 		\caption{HS with OA (Restricted deadline)}
 	   	\label{fig:schedulability-ratio-vs-channel-utilization-d}
 	\end{subfigure}
    	\caption{Schedulability ratio of \gls{LLF-RC} using HS with and without OA for various numbers of channels and total utilization. A point $ (x, y) $ on the curves corresponds to the schedulability ratio $ y $ of all scheduling problems which have its total utilization in the range of $ (x-0.25, x+0.25] $. The separation of two neighboring points on a curve is $ 0.5 $ in the $ x $-axis.}
 	\label{fig:schedulability-ratio-vs-channel-utilization}
 \end{figure*}

For 1P scheduling, \cite{saifullah2010:real-time-scheduling-wirelesshart-networks} reported that \gls{CLLF} outperforms \gls{EDF} and \gls{EPD} in terms of schedulability ratio. But for 2P scheduling,  this is not the case, especially for restricted deadlines.
We observe that the schedulability ratio of \gls{CLLF} for restricted deadlines may decrease when the number of channels increases, as shown \cref{fig:schedulability-ratio-algorithms-b} with $ C = 8 $ and $ C = 16 $. Another drawback is that \gls{CLLF} is significantly more time-consuming than the other algorithms, and is between $ 8 $ to $ 26 $ times slower than \gls{LLF-RC}.

For implicit deadlines, fixed-priority scheduling algorithms, such as \gls{RM}, \gls{DM}, and \gls{PDM}, perform slightly worse than the dynamic ones, such as \gls{CLLF}, \gls{EDF}, \gls{EPD}, \gls{EDZL}, and \gls{LLF-RC}. For instance, \gls{RM}, \gls{DM}, and \gls{PDM} are up to $ 4.56\% $, $ 4.56\% $, and $ 10.62\% $ less efficient than \gls{LLF-RC}, respectively, in terms of schedulability ratio. Fixed-priority scheduling algorithms should be avoided for restricted deadlines as their performance is significantly worse than the dynamic ones. For instance, \gls{RM}, \gls{DM}, and \gls{PDM} are up to $ 26.28\% $, $ 7.22\% $, and $ 14.89\% $ less efficient than \gls{LLF-RC}, respectively, in terms of schedulability ratio. 
The various multi-processor scheduling algorithms outperform \gls{RANDOM} by 2 to 11 times, in terms of schedulability ratio, which demonstrates the power of applying ideas drawn from the research on multi-processor scheduling  to \gls{WSAN} scheduling.
Although ALICE, \gls{TASA} and \gls{RANDOM} are competitive in terms of execution time, they are outperformed by the other algorithms in terms of schedulability ratio, showing that these best-effort algorithms are not suitable for real-time scheduling, especially when restricted deadlines are considered. 
\gls{RANDOM} arbitrarily selects a maximal transmission set to be scheduled per slot;
\gls{TASA} optimizes schedule length while ALICE aims to minimize scheduling overhead by computing the schedule locally at every node. None of them strives to meet the deadlines.
ALICE performs the worst among the evaluated algorithms, because it is an autonomous algorithm for which a link is only allowed to be scheduled in slots pre-determined by the source and destination IDs of the link. Thus, ALICE cannot guarantee to schedule a maximal set of transmissions per slot while all the other algorithms can.

\emph{\textsc{Remark:} Given our findings, which highlight the superior schedulability ratio of the proposed \gls{LLF-RC} algorithm, which we identify as the most relevant criterion. Our subsequent analysis is focused on conducting a comprehensive evaluation of \gls{LLF-RC} in various scenarios.}

\subsection{Performance of LLF-RC using HS with and without OA} 

We investigate the performance of \gls{LLF-RC} using \gls{HS} with and without \gls{OA}, in terms of schedulability ratio, execution time, and maximum queue length, under various parameter settings\footnote{The results for \gls{HS} without \gls{OA} are obtained using \cref{alg:framework} whereas the results for \gls{HS} with \gls{OA} are obtained using both \cref{alg:framework} and \cref{alg:opportunistic-aggregation} in tandem. For the latter case, \texttt{line 6} of \cref{alg:framework} is replaced by \cref{alg:opportunistic-aggregation}.}.

%
%

\textbf{Impact of number of channels and total utilization.} \cref{fig:schedulability-ratio-vs-channel-utilization} shows the schedulability ratio of \gls{LLF-RC} using \gls{HS} without \gls{OA} (in \cref{fig:schedulability-ratio-vs-channel-utilization-a} and \cref{fig:schedulability-ratio-vs-channel-utilization-b}) and with \gls{OA} (in \cref{fig:schedulability-ratio-vs-channel-utilization-c} and \cref{fig:schedulability-ratio-vs-channel-utilization-d}), considering different numbers of channels and total utilization. Throughout \cref{fig:schedulability-ratio-vs-channel-utilization-a} to \cref{fig:schedulability-ratio-vs-channel-utilization-d}, we observe that the schedulability ratio decreases monotonically when the total utilization increases, for a given number of channels. This occurs because busier networks are less schedulable. In addition, the use of implicit deadlines results in a larger feasible utilization region than with restricted deadlines. The reason is that scheduling with deadlines smaller than periods is more likely to be infeasible than scheduling with equal deadlines and periods, especially in busier networks, which have higher utilization. 

\begin{figure*}[!t] 
	\centering
	\begin{subfigure}{0.48\textwidth}
	   	\centering
		\includegraphics[width=\textwidth]{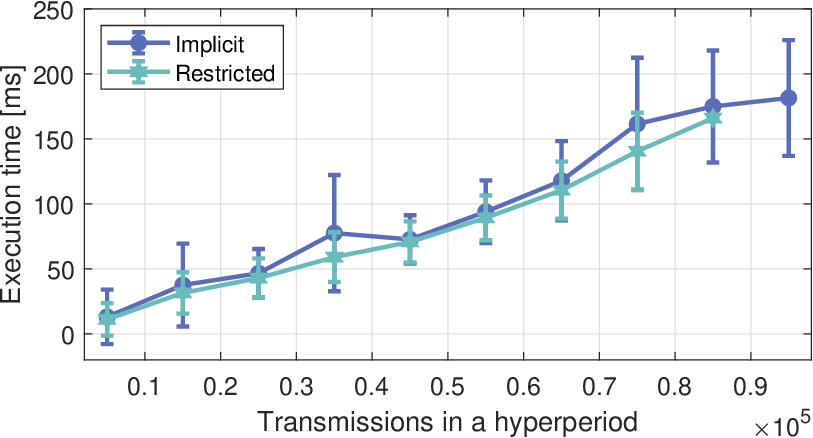}
		\caption{HS without OA}
	   	\label{fig:execution-time-a}
	\end{subfigure}
	\hfill
	\begin{subfigure}{0.48\textwidth}
		\centering
		\includegraphics[width=\textwidth]{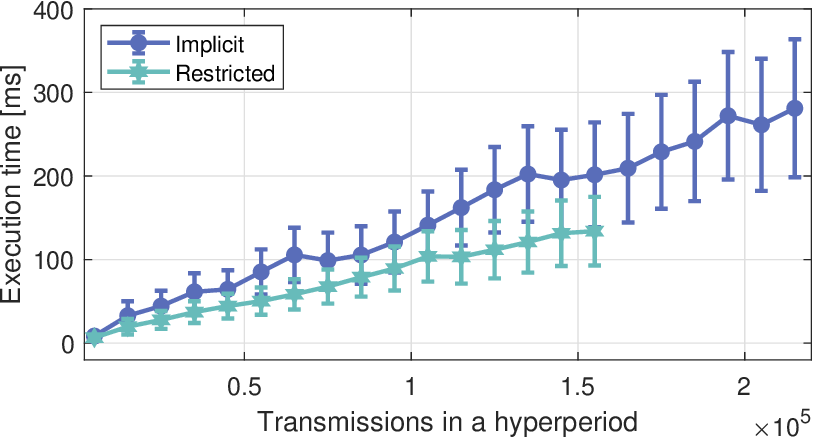}
		\caption{HS with OA}
	   	\label{fig:execution-time-b}
	\end{subfigure}
   	\caption{Execution time of LLF-RC using HS with and without OA. A point $(x, y)$ on the curves corresponds to the mean execution time ($y$) of all schedulable problems with number of transmissions in $ (x-0.5 \cdot 10^4, x+0.5 \cdot 10^4) $, $ x = \left\lbrace 0.5, 1.5, \dots \right\rbrace \cdot 10^4 $. The error bars denote $\pm$ one standard deviation.
   	}
   	\label{fig:execution-time}
\end{figure*}

\begin{figure*}[!t] 
	\centering
	\begin{subfigure}{0.48\textwidth}
	   	\centering
		\includegraphics[width=\textwidth]{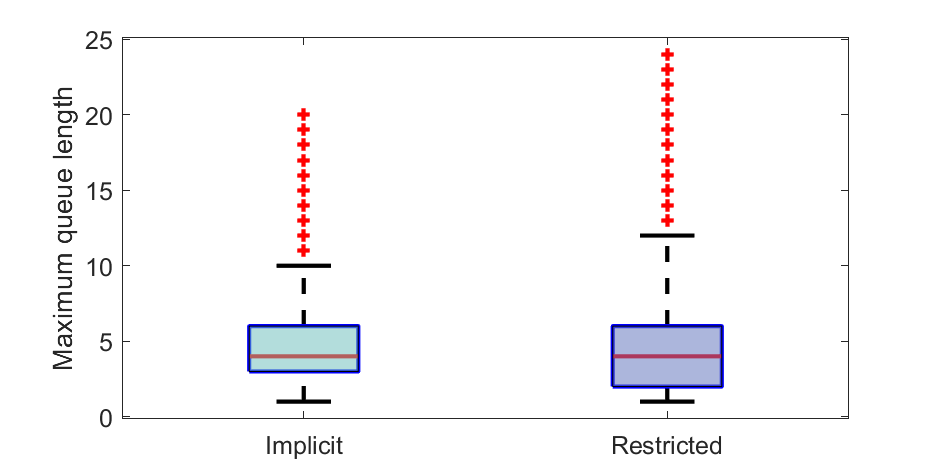}
		\caption{HS without OA}
	   	\label{fig:maximum-queue-length-a}
	\end{subfigure}
	\hfill
	\begin{subfigure}{0.48\textwidth}
		\centering
		\includegraphics[width=\textwidth]{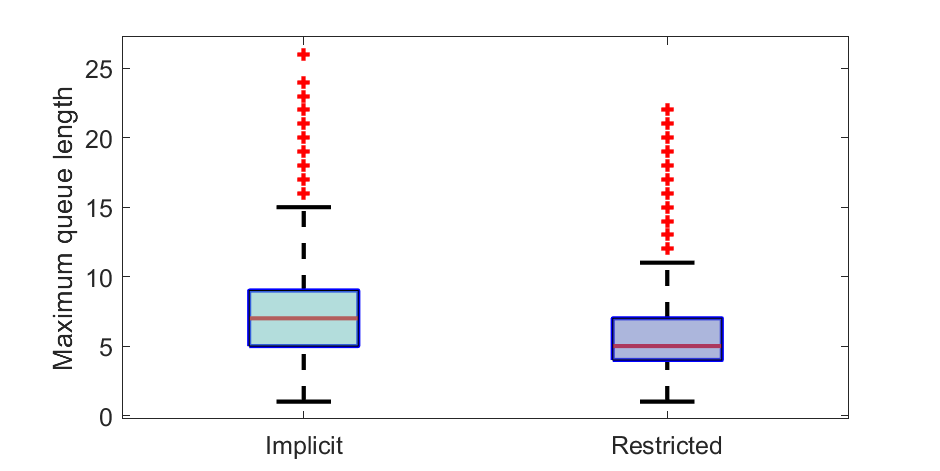}
		\caption{HS with OA}
	   	\label{fig:maximum-queue-length-b}
	\end{subfigure}
   	\caption{Maximum queue length of \gls{LLF-RC} using HS with and without OA. Box plots for various values of maximum queue length.}
   	\label{fig:maximum-queue-length}
\end{figure*}

From \cref{fig:schedulability-ratio-vs-channel-utilization-a} to \cref{fig:schedulability-ratio-vs-channel-utilization-d}, we observe that using \gls{HS} with \gls{OA} significantly improves the schedulability ratio for both implicit and restricted deadlines, compared to using \gls{HS} without \gls{OA}. In particular, the relative improvement obtained by \gls{OA} is greater when implicit deadlines are considered. The reason is that implicit deadlines are less restrictive, which gives more possibilities for aggregation. For instance, considering $ C = 8 $ and implicit deadlines, we observe that the use of \gls{OA} contributes to increasing the schedulability ratio from $ 16\% $ to $ 97 \%$. For restricted deadlines, the use of \gls{OA} improves the schedulability ratio from $ 3\% $ to $ 46 \%$. Thus, using \gls{HS} with \gls{OA} is very effective in combating conflicts, pushing the maximum total utilization far beyond the number of channels, especially for implicit deadlines.  

\gls{OA}'s aggregation rate, i.e. the percentage of combined packets among all packets, is on average $ 23\% $ and $ 17\% $, for implicit and restricted deadlines, respectively. Thus, we note that with a moderate aggregation of packets, we can increase the schedulability ratio significantly.

\textbf{Execution time.} In \cref{fig:execution-time}, we show the execution time of \gls{LLF-RC} using \gls{HS} with and without \gls{OA}. We observe that the execution time increases almost linearly with the number of transmissions in both cases. Moreover, for the same number of transmissions, implicit deadline requires a relatively larger execution time. 

It is worth noting that the adoption of \gls{OA} results in a reduction in execution time ranging from $ 14\% $ to $ 29\% $ compared to the case without \gls{OA}. The shorter execution time achieved when using \gls{OA} is attributed to a higher throughput, which is consequence of a more compact scheduling table with fewer slots containing non-empty schedules. Still, it is important to recognize that \gls{OA} requires more time to schedule a single slot due to the higher complexity of \cref{alg:opportunistic-aggregation}.


\textbf{Maximum queue length.} As sensor nodes usually have very limited memory resources, it is vital to investigate the memory consumption during network operation, for which we use the maximum queue length metric. When evaluating this metric, we only take into account the motes, without considering the gateways, since the latter usually have sufficient storage capacity.

From our evaluation, we found that the maximum queue length increases linearly with the number of flows and total utilization. Also, the maximum queue length increases almost linearly with the number of channels. This occurs because more channels allows for more simultaneous transmissions, leading to a more dynamic behavior of the maximum queue length. This leads to more flows and more transmissions ready to be scheduled per slot, and consequently to a higher number of potential contentions per link. Therefore, packets are more likely to be queued at a node. From \cref{fig:maximum-queue-length-a} and \cref{fig:maximum-queue-length-b}, we observe that the maximum queue length is higher when \gls{OA} is used, since the higher throughput provided by \gls{OA} causes more packets to be buffered at intermediate nodes. 

In \cref{fig:maximum-queue-length-a}, we observe that the median value of the maximum queue length is approximately $ 5 $ for implicit and restricted deadlines. However, the worst-case occurs for restricted deadlines, where up to $ 24 $ packets are queued. Since a packet has a payload size of at most $ 127 $ bytes, this requires less than $ 3 $ KBytes, which can fit into the RAM of a commercial sensor node. In \cref{fig:maximum-queue-length-b}, we show a similar case, where the maximum queue length has a median value of less than $ 6 $ for implicit and restricted deadlines. The worst-case occurs for implicit deadlines, where up to $ 26 $ packets are queued. However, this is also small enough to fit into the RAM of a sensor node. 

We observe that the penalty in maximum queue length incurred by the use of \gls{OA} is affordable in exchange for a significant improvement in schedulability ratio and reduction of execution time.

\emph{\textsc{Remark:} {Simulation results confirm that the proposed \gls{OA} mechanism increases the schedulability ratio for scheduling problems, and shortens the execution time of scheduling algorithm, with only slight increase in buffer size at nodes.}}

\begin{figure*}[!t] 
	\centering
	\begin{subfigure}{0.48\textwidth}
	   	\centering
		\includegraphics[width=\textwidth]{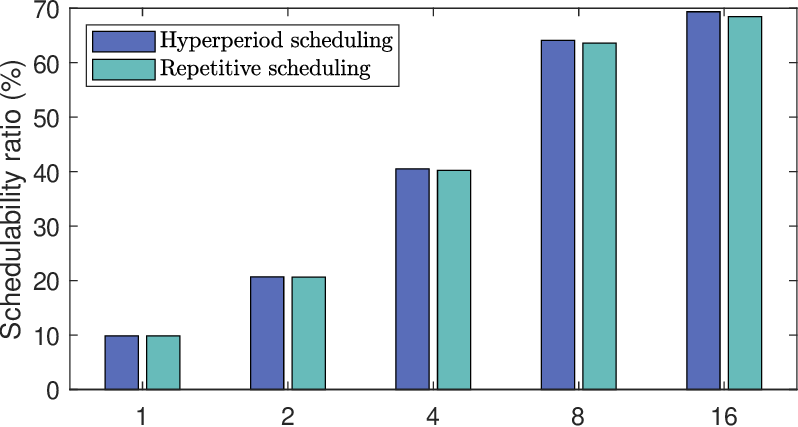}
		\caption{Without \gls{OA} (Implicit deadline)}
	   	\label{fig:repetitive-vs-hyperperiod-scheduling-a}
	\end{subfigure}
	\hfill
	\begin{subfigure}{0.48\textwidth}
		\centering
		\includegraphics[width=\textwidth]{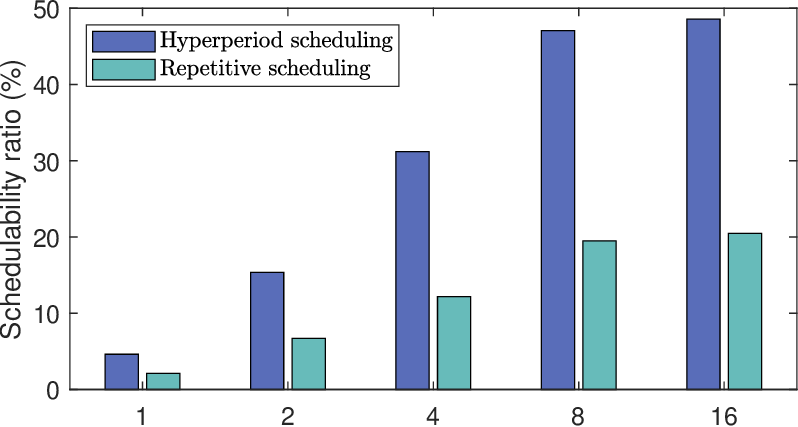}
		\caption{Without \gls{OA} (Restricted deadline)}
	   	\label{fig:repetitive-vs-hyperperiod-scheduling-b}
	\end{subfigure}
	\begin{subfigure}{0.48\textwidth}
	   	\centering
		\includegraphics[width=\textwidth]{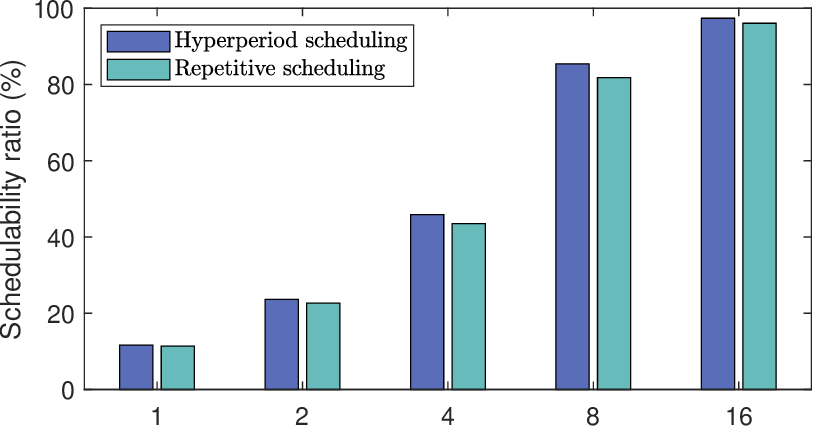}
		\caption{With \gls{OA} (Implicit deadline)}
	   	\label{fig:repetitive-vs-hyperperiod-scheduling-c}
	\end{subfigure}
	\hfill
	\begin{subfigure}{0.48\textwidth}
		\centering
		\includegraphics[width=\textwidth]{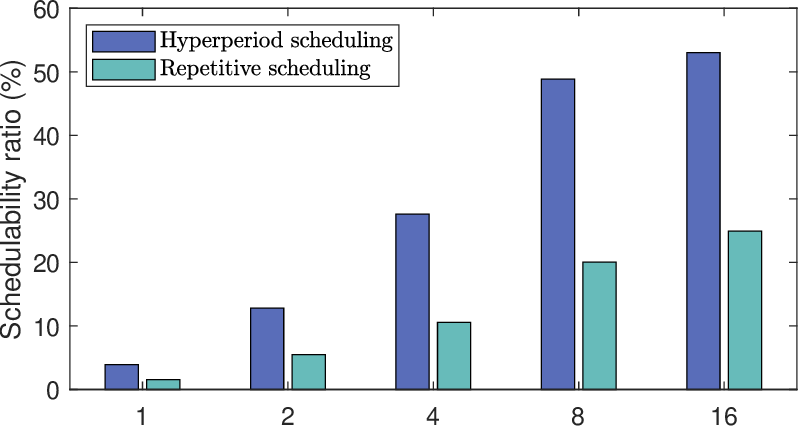}
		\caption{With \gls{OA} (Restricted deadline)}
	   	\label{fig:repetitive-vs-hyperperiod-scheduling-d}
	\end{subfigure}
   	\caption{Schedulability ratio of LLF-RC using RS or HS.}
   	\label{fig:repetitive-vs-hyperperiod-scheduling}
\end{figure*}

\subsection{Performance of LLF-RC using HS or RS}  \label{sec:perf-llc-rc-hs-rs}

We evaluate the performance of \gls{LLF-RC} using \gls{RS} or \gls{HS}, in terms of schedulability ratio, scheduling table size, and execution time.

\textbf{Schedulability ratio.} In \cref{fig:repetitive-vs-hyperperiod-scheduling}, we show the schedulability ratio of \gls{RS} and \gls{HS}, respectively. For implicit deadlines, we observe that \gls{RS} decreases the schedulability ratio only slightly compared to \gls{HS}. For restricted deadlines, however, \gls{RS} causes a large decrement in the schedulability ratio, approximately $ 50\% $. In practice, most control systems operate under implicit deadline requirements, thus making \gls{RS} still highly attractive due to its reduced scheduling table size and execution time, which are discussed next.

\textbf{Scheduling table size.} \cref{fig:cost-savings} shows the number of used entries in the scheduling table when \gls{RS} or \gls{HS} is employed. Because the communication and storage costs are proportional to the number of used entries, we observe a tremendous cost reduction with \gls{RS}. We realize from \cref{fig:cost-savings-a} and \cref{fig:cost-savings-b} that the maximum number of used entries are $ 1284 $ and $ 150396 $, for \gls{RS} and \gls{HS}, respectively. Each entry in the scheduling table contains the fields: slot ID ($ 2 $ bytes), channel ID ($ 4 $ bits), sender ID ($ 1 $ byte), receiver ID ($ 1 $ byte) and flow ID ($ 6 $ bits). In total, the size of each entry is about $ 5 $ bytes. Thus, the maximum scheduling table size for \gls{RS} and \gls{HS} are $ 51.4 $ kbits and $ 6015.8 $ kbits, respectively. As shown by \cite{yuan2015:ripple-high-throughput-reliable-energy-efficient-network-flooding-wireless-sensor-networks}, the Ripple protocol can achieve a flooding throughput of about $ 90 $ kbit/s. Therefore, the scheduling table for \gls{RS} can be downloaded in less that $ 0.6 $s, allowing frequent schedule refresh. On the other hand, with \gls{HS}, more than $ 66 $s would be required, making it less practical. In addition, since the scheduling table has to be transmitted and stored at the wireless nodes, there are significant savings in communication and storage costs, which is especially critical for wireless nodes with limited resources, such as sensors and relays.

\begin{figure*}[!t] 
	\centering
	\begin{subfigure}{1\columnwidth}
	   	\centering
		\includegraphics[width=\textwidth]{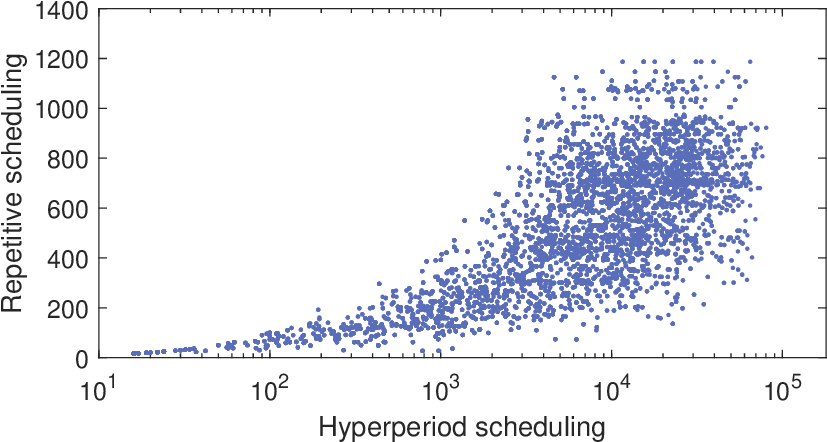}
		\caption{Without OA (Implicit deadline)}
	   	\label{fig:cost-savings-a}
	\end{subfigure}
	\hfill
	\begin{subfigure}{1\columnwidth}
		\centering
		\includegraphics[width=\textwidth]{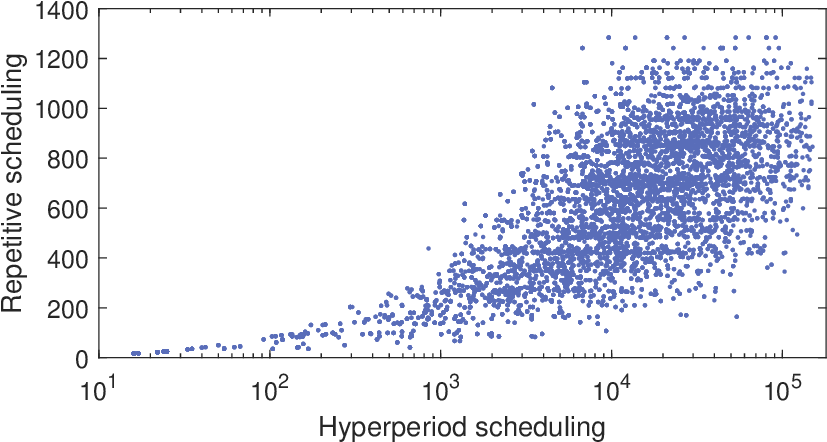}
		\caption{With OA (Implicit deadline)}
	   	\label{fig:cost-savings-b}
	\end{subfigure}
   	\caption{Number of used entries in the scheduling table using LLF-RC with RS or HS.}
   	\label{fig:cost-savings}
\end{figure*}

\textbf{Execution time.} In \cref{fig:execution-time-repetitive-hyperperiod-scheduling}, we compare the execution time of \gls{HS} and \gls{RS}. We observe that when \gls{HS} is employed, the execution time grows linearly for an increasing number of transmissions. In contrast, the execution time remains almost constant when \gls{RS} is employed, showing its potential for real-time control applications. 

{For a network of 100 nodes and up to 50 flows, the execution time of scheduling algorithm \gls{RS} is in most cases below 10ms. By applying \gls{OA}, the execution time can be further reduced by half. Thus, the scheduling scheme \gls{LLF-RC} + \gls{RS} + \gls{OA} proposed in this paper is practical for medium-size industrial networks.}

\emph{\textsc{Remark:} 
{Simulation results confirm that for scheduling problems of implicit deadline,
the proposed \gls{RS} mechanism reduces the scheduling table size and the execution time of scheduling algorithm tremendously, with only slight decrease in schedulability ratio.}}

\begin{figure*}[!t] 
	\centering
	\begin{subfigure}{0.48\textwidth}
	   	\centering
		\includegraphics[width=\textwidth]{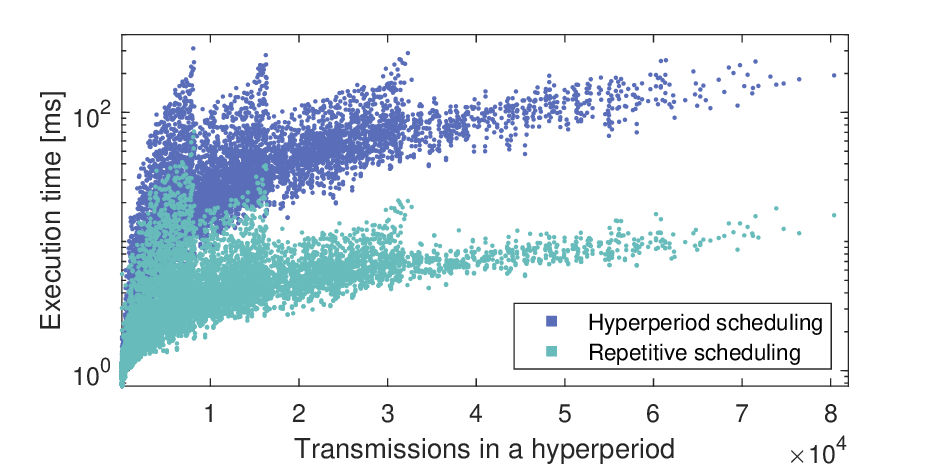}
		\caption{Without OA (Implicit deadline)}
	   	\label{fig:execution-time-repetitive-hyperperiod-scheduling-a}
	\end{subfigure}
	\hfill
	\begin{subfigure}{0.48\textwidth}
		\centering
		\includegraphics[width=\textwidth]{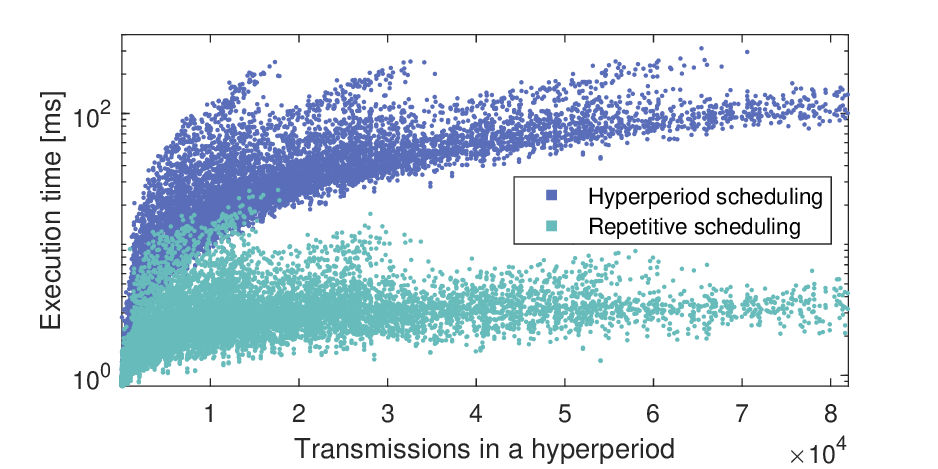}
		\caption{With OA (Implicit deadline)}
	   	\label{fig:execution-time-repetitive-hyperperiod-scheduling-b}
	\end{subfigure}
	\caption{Execution time of LLF-RC using \gls{HS} or \gls{RS}.}
   	\label{fig:execution-time-repetitive-hyperperiod-scheduling}
\end{figure*}

%% file: sections/extension-large-network.tex
\section{Extension to Large-scale Real-world Industrial Networks} \label{sec:extension-large-net}


So far, the algorithms proposed in this paper are only suitable for small to medium size industrial networks, because frequency (communication channel) reuse is disallowed. Therefore, the network capacity is limited, and will not scale with network size (number of nodes). To extend the proposed algorithms to large-scale real-world industrial networks, we have to leverage frequency reuse in the manner of cellular networks.

\begin{figure}[!h] 
   \centering
   \includegraphics[width=5cm]{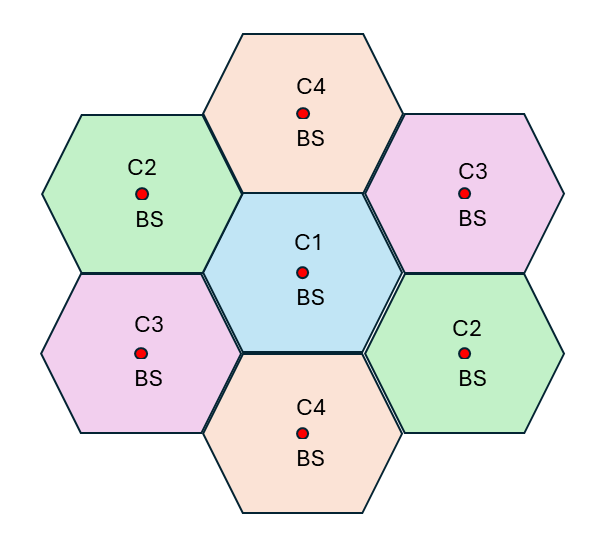}
   \caption{Illustration of a large-scale network divided into cells. We have 4 types of cells (C1, C2, C3, C4). Each cell uses 4 communication channels. }
   \label{fig:large-network}
\end{figure}

As illustrated in Fig.~\ref{fig:large-network}, a large-scale network can be divided into many cells. Each cell has a base station, and all nodes in the cell can only use a certain set of communication channels. Two neighboring cells must use different sets of communication channels to reduce mutual interference. The base stations are connected by a wired backbone network. 

\textbf{Routing.} A sensing packet must be routed from a sensor to the base station of the cell using  communication channels that are allowed by the cell. An actuating packet must be routed to an actuator from the base station of the cell using communication channels that are allowed by the cell.
Hence, the hop distance from a sensor to an actuator is in the worst case twice the cell diameter, irrespective of the network size. Cell diameter is defined at the largest hop distance of two nodes in a same cell. So the flow reliability and  the flow delay are independent of network size.
The same multi-path routing model in Section~\ref{sec:routing-model} can be applied for large-scale networks.

\textbf{Scheduling algorithms.} The scheduling algorithms proposed in Section~\ref{sec:proposed-scheduling-framework} can be easily extended to support large-scale networks. The only needed modification is that links in different cells can be scheduled in a same slot. In this way, the network throughput is linear to the number of cells, and thus, it is proportional to the network size.

\textbf{Execution time.} The execution time of scheduling algorithms should be linear to network size and number of flows if a central unit performs the scheduling in a serial way. Yet, the execution time can be reduced significantly since the link scheduling per slot can be parallelized. This is due to the fact that any two links from different cells can always be scheduled in a same slot because of the cellular division of a large network. Thus, the link scheduling per slot can be done in parallel for each cell.

\color{black}

%% file: sections/conclusions.tex

\section{Conclusion} \label{sec:conclusion}

In this paper, we investigated centralized \gls{TDMA} scheduling of \glspl{WSAN} for multi-rate periodic control systems in Industry 4.0. We proposed \emph{2P scheduling}, a new scheduling framework with higher reliability than conventional \gls{WSAN} scheduling. 
Drawing from multiprocessor-scheduling, we further proposed \gls{LLF-RC}, an efficient scheduling algorithm that prioritizes flows based on link conflicts.
Simulations showed \gls{LLF-RC} has the highest schedulability ratio among benchmarked algorithms.
It has low execution time (in most cases below 10ms for a network of 100 nodes and up to 50 flows) and low maximum queue length, making it practical for real-world systems.
We also proposed \gls{OA} and \gls{RS}, two new scheduling strategies that can be easily integrated with any scheduling algorithm.
\gls{OA} increases schedulability ratio by up to 97\% in our simulation, and pushes total utilization beyond the number of channels in the system, suitable for congested networks.
\gls{RS} reduces the maximum scheduling table size and the maximum execution time by 99\% and 92\% respectively in our simulation, leading to high energy savings at wireless nodes.
We evaluated these techniques in various scenarios, demonstrating that \gls{LLF-RC}, in combination with \gls{OA} and \gls{RS}, provides an effective and efficient solution for real-time TDMA scheduling in industrial \glspl{WSAN}.
Finally, we extended the proposed scheduling algorithms to support large-scale networks by exploiting cellular frequency reuse.
There, network throughput increases linearly with network size, flow reliabilities and delays remain independent of network size, and the scheduling algorithms can be parallelized.

%
%
%
%

%% file: sections/appendices.tex

\begin{appendices}

\renewcommand{\thesectiondis}[2]{\Alph{section}:}

\setcounter{equation}{0}
\renewcommand{\theequation}{A.\arabic{equation}}
\section{Reliability of 1P and 2P Scheduling} \label{app:thm-1}

Let us introduce $ l_i \geq 0 $ and $ m_i \geq 0 $ such that $ l_i = (1 - r_i^\mathrm{sc}) $ and $ m_i = (1 - r_i^\mathrm{ca}) $. Therefore, $ r^\mathrm{1P} $ and $ r^\mathrm{2P} $ are equivalently expressed as
\begin{align}
	r^\mathrm{1P} & = 1 - \prod_{i = 0}^{n-1} (1 - (1 - l_i)(1 - m_i)) \nonumber \\
				  & = 1 - \prod_{i = 0}^{n-1} (l_i + m_i - l_i m_i) \nonumber
\end{align}
\begin{align}
	r^\mathrm{2P} & = \left( 1 - \prod_{i = 0}^{n - 1} l_i \right) \left( 1 - \prod_{i=0}^{n-1} m_i \right) \nonumber \\
				  & = 1 - \prod_{i = 0}^{n - 1} l_i - \prod_{i = 0}^{n - 1} m_i + \prod_{i = 0}^{n - 1} l_i m_i \nonumber
\end{align}

Let us introduce $ c^\mathrm{1P} \leq 0 $ and $ c^\mathrm{2P} \leq 0 $, defined below,
\begin{align*}
	c^\mathrm{1P} & = - \prod_{i = 0}^{n - 1} (l_i + m_i - l_i m_i),  \\
	c^\mathrm{2P} & = - \prod_{i = 0}^{n - 1} l_i - \prod_{i = 0}^{n - 1} m_i + \prod_{i = 0}^{n - 1} l_i m_i,  
\end{align*}
yielding $ r^\mathrm{1P} = 1 + c^\mathrm{1P} $ and $ r^\mathrm{2P} = 1 + c^\mathrm{2P} $. Hence, proving that $ r^\mathrm{2P} \geq r^\mathrm{1P} $ is equivalent to proving that $ c^\mathrm{1P} \leq c^\mathrm{2P} $.

For notation simplicity, let us assume that $ k = n- 1 $. Thus, $ c^\mathrm{1P} $ can be expressed as follows,
\begin{align*}
	c^\mathrm{1P} = & - \prod_{i = 0}^{k} (l_i + m_i - l_i m_i), 
	\\
				  = & - \prod_{i = 0}^{k - 1} (l_i + m_i - l_i m_i) (l_k + m_k - l_k m_k), 
	\\
				  = & \left( - \prod_{i = 0}^{k - 1} l_i - \prod_{i = 0}^{k - 1} m_i + \prod_{i = 0}^{k - 1} l_i m_i \right) \cdot \left( l_k + m_k - l_k m_k \right), 
	\\
	 			  = & - \prod_{i=0}^{k} l_i - \prod_{i=0}^{k} m_i - \prod_{i=0}^{k} l_i m_i - \prod_{i=0}^{k-1} l_i m_k + \prod_{i=0}^{k-1} l_i l_k m_k
	\\
				  & - \prod_{i=0}^{k-1} m_i l_k + \prod_{i=0}^{k-1} m_i l_k m_k + \prod_{i=0}^{k-1} l_i m_i l_k + \prod_{i=0}^{k-1} l_i m_i m_k, 
	\\
	 			  = & - \prod_{i=0}^{k} l_i - \prod_{i=0}^{k} m_i + \left( \prod_{i=0}^{k} l_i m_i - 2 \prod_{i=0}^{k} l_i m_i \right)
	\\
				  & - \prod_{i=0}^{k-1} l_i m_k + \prod_{i=0}^{k-1} l_i l_k m_k - \prod_{i=0}^{k-1} m_i l_k 
	\\			  
				  & + \prod_{i=0}^{k-1} m_i l_k m_k + \prod_{i=0}^{k-1} l_i m_i l_k + \prod_{i=0}^{k-1} l_i m_i m_k, 
	\\
				 = & - \prod_{i=0}^{k} l_i - \prod_{i=0}^{k} m_i + \prod_{i=0}^{k} l_i m_i 
	\\
				 & - \prod_{i=0}^{k-1} l_i m_i (l_k m_k - m_k) - \prod_{i=0}^{k-1} l_i m_i (l_k m_k - l_k) 
	\\
				 & - \prod_{i=0}^{k-1} l_i m_k (1 - l_k) - \prod_{i=0}^{k-1} m_i l_k (1 - m_k), 
	\\
				= & - \prod_{i=0}^{k} l_i - \prod_{i=0}^{k} m_i + \prod_{i=0}^{k} l_i m_i 
	\\
				 & - \prod_{i=0}^{k-1} l_i m_k (1- l_k) (1 - \prod_{i=0}^{k-1} m_i)  
	\\
				 & - \prod_{i=0}^{k-1} m_i l_k (1- m_k) (1 - \prod_{i=0}^{k-1} l_i). 
\end{align*}

From the above equality, we obtain that $ c^\mathrm{1P} = c^\mathrm{2P} + c' + c'' $, where $ c' < 0 $ and $ c'' < 0 $ are defined as $ c' = \prod_{i=0}^{k-1} l_i m_k (1- l_k) (1 - \prod_{i=0}^{k-1} m_i) $ and $ c'' = \prod_{i=0}^{k-1} m_i l_k (1- m_k) (1 - \prod_{i=0}^{k-1} l_i) $. Thus, we conclude that $ c^\mathrm{2P} \geq c^\mathrm{1P} $, and consequently we claim that  $ r^\mathrm{2P} \geq r^\mathrm{1P} $, thus completing the proof.

\setcounter{equation}{0}
\renewcommand{\theequation}{B.\arabic{equation}}
\section{Latency of 1P and 2P Scheduling} \label{app:thm-2}

Assuming that $\mathrm{hops}(\pi_i)$ denotes the hop count of a given path $ \pi_i $, the minimum flow delay of \emph{2P scheduling} is $ d^\mathrm{2P} = \underset{i}{\mathrm{max}}\{\mathrm{hops}(\pi_i^\mathrm{sc})\} + \underset{i}{\mathrm{min}}\{\mathrm{hops}(\pi_i^\mathrm{ca})\} $, which is equal to the sum hop count of the longest sc-path and the shortest ca-path. We need to make sure that the sensor packet can be delivered to the controller even via the longest sc-path when \emph{2P scheduling} is employed. On the other hand, the minimum flow delay of \emph{1P scheduling} is $ d^\mathrm{1P} = \underset{i}{\mathrm{min}}\{\mathrm{hops}(\pi_i^\mathrm{sc}) + \mathrm{hops}(\pi_i^\mathrm{ca})\} $, which is equal to the hop count of the shortest sa-path. 


To prove that $ d^\mathrm{2P} \geq d^\mathrm{1P} $, we derive a first inequality, which we refer to as $ \mathcal{I}_1 $. Let $ \left\lbrace x_i \right\rbrace_{i=0}^{n-1}, \left\lbrace y_i \right\rbrace_{i=0}^{n-1} \in \mathbb{Z}^{+} \cup \left\lbrace 0 \right\rbrace $ be nonnegative integers, since the hop count has the same nature. Note that $ \underset{i}{\mathrm{min}}\{ x_i \} \leq  x_i $ and $ \underset{i}{\mathrm{min}}\{ y_i \} \leq y_i $ hold true for all $ i $. Using these inequalities, we obtain that $ \underset{i}{\mathrm{min}}\{ x_i \} + \underset{i}{\mathrm{min}}\{ y_i \} \leq x_i + y_i $, which also holds for all $ i $, including the index $ i' $ that attains $ x_{i'} + y_{i'} = \underset{i}{\mathrm{min}} \{ x_{i'} + y_{i'} \} $. Therefore, we have obtained our first inequality $ \mathcal{I}_1: \underset{i}{\mathrm{min}}\{ x_i \} + \underset{i}{\mathrm{min}}\{ y_i \} \leq \underset{i}{\mathrm{min}}\{ x_i + y_i \} $. 

Now, we derive a second inequality, which we refer to as $ \mathcal{I}_2 $. Let $ \left\lbrace w_i \right\rbrace_{i=0}^{n-1}, \left\lbrace z_i \right\rbrace_{i=0}^{n-1} \in \mathbb{Z}^{+} \cup \left\lbrace 0 \right\rbrace $, such that $ x_i = w_i + z_i $. Replacing this equivalence in $ \mathcal{I}_1 $ leads to $ \underset{i}{\mathrm{min}}\{ w_i + z_i \} + \underset{i}{\mathrm{min}}\{ y_i \} \leq \underset{i}{\mathrm{min}}\{ w_i + z_i + y_i \} $. Note than an upper bound for $ w_i $ is $ \underset{i}{\mathrm{max}}\{ w_i \} $, which we can use to obtain a looser inequality  $ \underset{i}{\mathrm{min}}\{ w_i + z_i \} + \underset{i}{\mathrm{min}}\{ y_i \} \leq \underset{i}{\mathrm{min}}\{ \underset{i}{\mathrm{max}}\{ w_i \} + z_i + y_i \} $. Because, $ \underset{i}{\mathrm{max}}\{ x_i \} $ is constant for all $ i $, the inequality can be equivalently expressed as $ \underset{i}{\mathrm{min}}\{ w_i + z_i \} + \underset{i}{\mathrm{min}}\{ y_i \} \leq \underset{i}{\mathrm{max}}\{ w_i \} + \underset{i}{\mathrm{min}}\{ z_i + y_i \} $. By setting $ y_i = 0 $, we obtain our second inequality $ \mathcal{I}_2: \underset{i}{\mathrm{min}}\{ w_i + z_i \} \leq \underset{i}{\mathrm{max}}\{ w_i \} + \underset{i}{\mathrm{min}}\{ z_i \} $. Finally, with the changes of variables, $ w_i = \mathrm{hops}(\pi_i^\mathrm{sc}) $ and $ z_i = \mathrm{hops}(\pi_i^\mathrm{ca}) $, we obtain that $ d^\mathrm{2P} \geq d^\mathrm{1P} $, which completes the proof.

\end{appendices}